\documentclass[11pt]{article}

\usepackage[a4paper,margin=1.2in]{geometry}
\usepackage{amsmath,amssymb,amsthm,mathtools}
\usepackage[dvipsnames]{xcolor}
\usepackage[colorlinks=true,citecolor=Blue,urlcolor=Blue,linkcolor=Blue]{hyperref}
\usepackage[margin,draft]{fixme}
\usepackage{array,enumerate}
\usepackage{natbib}
\usepackage{verbatim}

  \usepackage{tikz}
\usetikzlibrary {shapes}
\usetikzlibrary {arrows}
\usetikzlibrary {positioning}

\title{Bayesian Networks of Density Operators}
\author{Steffen Lauritzen\thanks{\texttt{lauritzen@math.ku.dk}}\\University of Copenhagen \and Piotr Zwiernik\thanks{\texttt{piotr.zwiernik@upf.edu}}\\Universitat Pompeu Fabra}
\date{}

\providecommand{\keywords}[1]{{\small\textbf{Keywords:}} #1}

\newcommand{\graph}{\mathcal{G}}
\renewcommand{\dag}{\mathcal{D}}
\newcommand{\hilb}{\mathcal{H}}
\newcommand{\cliques}{\mathcal{C}}
\newcommand{\sep}{\mathcal{S}}
\newcommand{\trace}{\mathrm{Tr}}
\newcommand{\C}{\mathbb{C}}
\newcommand{\cL}{\mathcal{L}}
\newcommand{\ent}{\operatorname{S}}
\renewcommand{\>}{\rangle}

\newcommand{\gse}{\mbox{\,$\perp\!\!\!\perp_\graph$\,}}
\newcommand{\dse}{\mbox{\,$\perp\!\!\!\perp_\dag$\,}}
\newcommand{\ciph}{\mbox{\,$\perp\!\!\!\perp_Q$\,}}
\newcommand{\cd}{\,|\,}
\newcommand{\parents}{\operatorname{pa}}
\newcommand{\desc}{\operatorname{de}}
\newcommand{\nondesc}{\operatorname{nd}}
\newcommand{\pre}{\operatorname{pr}}
\newcommand{\ske}{\operatorname{ske}}

\newtheorem{thm}{Theorem}[section]
\newtheorem{prop}[thm]{Proposition}
\newtheorem{lem}[thm]{Lemma}
\newtheorem{cor}[thm]{Corollary}

\theoremstyle{definition}
\newtheorem{definition}[thm]{Definition}
\newtheorem{ex}[thm]{Example}
\newtheorem{rem}[thm]{Remark}

\begin{document}

\maketitle

\begin{abstract}
We study quantum analogues of Bayesian networks on a directed acyclic graph
(DAG), distinguishing two constructions for positive definite density
operators on finite-dimensional tensor-product Hilbert spaces. The intrinsic
construction starts from a joint state and its conditional-independence
properties. The extrinsic construction assembles a state sequentially from
prescribed local quantum kernels, following an ordering compatible with the
arrows of the DAG. For the intrinsic construction, we prove the equivalence of
the ordered, local, and global directed Markov properties, together with
entropy, recursive-factorization, and logarithmic characterizations. The
extrinsic construction always gives a normalized state and recovers each
kernel as a conditional on all preceding systems. The same kernel, however,
need not be recovered from the marginal on the vertex and its parents; a
three-qubit example exhibits this obstruction. We prove that independence of
the chosen topological ordering is sufficient exactly for transitive DAGs:
every order-invariant kernel family then yields an intrinsically directed
Markov state. Finally, we associate a logarithmic candidate with every
positive definite state and DAG, prove that it is subnormalized, and show that
the trace-one candidate is a directed Markov state. Both the candidate and the
excess global information are invariant under DAG Markov equivalence.
\end{abstract}

\keywords{causal networks; directed acyclic graphs; directed Markov properties; quantum conditional independence;
factorization; Markov equivalence; logarithmic additivity.}

\section{Introduction and summary}

Classical Bayesian networks have two equivalent interpretations. On the one
hand, a directed acyclic graph (DAG) gives a local representation of joint
distributions. Each vertex $v$ is assigned a Markov kernel
$p(x_v\cd x_{\parents(v)})$, and these kernels are combined by the product
formula
$$
    p(x_V)
    =
    \prod_{v\in V}p(x_v\cd x_{\parents(v)}).
$$
On the other hand, the same DAG encodes conditional independence restrictions:
the resulting distribution satisfies the ordered, local and global Markov
properties associated with the DAG.

In the classical setting these two interpretations agree.
The product does not depend on the order in which the factors are
multiplied, and the kernels used to construct the joint distribution are
exactly the conditional distributions of the resulting joint law. Thus the same
graph simultaneously describes a representation by local mechanisms and the
conditional independence structure of the joint distribution. This equivalence
is one of the reasons why DAGs are central in statistical modelling, causal
inference and structural equation systems; see, for example,
\citet{pearl:00}.

There is already a substantial literature on quantum analogues of Bayesian
networks and causal graphical models. Early quantum Bayesian network
formalisms were proposed by \citet{tucci_1997}. The conditional-state approach
of \citet{leifer_poulin_2008} is close to the intrinsic viewpoint
taken here: it develops quantum graphical models using independence
properties and proposes quantum analogues
of Bayesian networks, Markov networks, and belief propagation. Related work by
\citet{leifer_spekkens_2013} studies quantum analogues of Bayesian updating
and Bayesian inversion in the same conditional-state language. A different
line of work, closer to operational quantum causal modelling, represents local
mechanisms by channels or process operators.
\citet[Definition~10]{allen_barrett_horsman_lee_spekkens_2017}, in particular,
define a quantum causal model on a DAG by a product of pairwise commuting local
channel operators. Related operational frameworks include
\citet{costa_shrapnel_2016}, \citet{giarmatzi_costa_2018} and
\citet{barrett_lorenz_oreshkov_2021}.

The present paper studies the relationship between two ways of using local
quantum objects in a directed graphical model. The first is intrinsic: one
begins with a joint density operator and derives the corresponding conditional
density operators.
The second is extrinsic: one specifies local positive definite operators and
uses them to construct a joint density operator. These two ways coincide in the
classical case, but not for quantum density operators. Our main results show that the
intrinsic viewpoint retains the classical equivalence of directed Markov
properties, while the extrinsic construction generally depends on an ordering
and coincides with the intrinsic construction only under additional
compatibility conditions.

The extrinsically specified local positive definite operators will be called
quantum kernels. For a vertex $v$ with parents $\parents(v)$, such a kernel is a
positive definite operator $Q_{v|\parents(v)}$ on
$\hilb_v\otimes\hilb_{\parents(v)}$ satisfying
$$
    \trace_v\!\left(Q_{v|\parents(v)}\right)
    =
    I_{\parents(v)}.
$$

We first develop the intrinsic directed quantum Markov structure. The object
is a positive definite density operator $\rho$ on a tensor-product Hilbert
space. The DAG is interpreted through the quantum conditional independences
satisfied by $\rho$. This places the directed
Markov part of the paper in a tradition similar to
\citet{leifer_poulin_2008}. We establish that the ordered, local, and global
directed Markov properties are equivalent for arbitrary finite DAGs.  The
corresponding excess global information is exactly the sum of the ordered
conditional mutual informations. Consequently, each intrinsic
characterization below applies under any one of the three Markov properties.

The implication from the local Markov property {\rm(L)} to intrinsic
factorization goes back to
\citet[Theorem~4.12]{leifer_poulin_2008}. Specializing their explicit
construction to the direct conditional density operator, corresponding to
$n=1$ in their notation, gives a recursive factorization using
$$
    \rho_{v|\parents(v)}
    =
    \rho_{\parents(v)}^{-1/2}
    \rho_{v\cup\parents(v)}
    \rho_{\parents(v)}^{-1/2}.
$$
The converse direction--from this recursive factorization back to directed
Markovness--is not established there. We prove it by showing that every
intermediate state in the recursion is the corresponding marginal of the
final state. Combined with our equivalence of the ordered, local, and global
Markov properties, this shows that each of the three properties is equivalent
to the intrinsic factorization. This provides the intrinsic benchmark for the
extrinsic kernel construction. Thus, at the level of intrinsic Markov
structure, the main equivalences of the classical theory survive. Along a
topological ordering, both the entropy chain rule and the logarithmic
conditionals telescope one vertex at a time, avoiding the recursive
clique--separator arguments needed in the undirected chordal formulation.

We then compare this intrinsic structure to an extrinsic kernel construction.
Given quantum kernels $Q_{v|\parents(v)}$ and a topological ordering
$v_1<\cdots<v_n$, write $V_j=\{v_1,\ldots,v_j\}$. We then construct a
density operator recursively by
$$
    \omega_{V_j}
    =
    Q_{v_j|\parents(v_j)}\star \omega_{V_{j-1}},
    \qquad \text{with }
    M\star N=N^{1/2}MN^{1/2}.
$$
This is the density-operator analogue of the classical recursive
construction by local kernels. It always produces a normalized state, and the
specified kernel is recovered as the conditional density operator of the new
vertex given the full ordered past:
$$
    \omega_{v_j|V_{j-1}}
    =
    Q_{v_j|\parents(v_j)}
    \otimes I_{V_{j-1}\setminus\parents(v_j)}.
$$
In contrast to classical probability, in the quantum case this does not imply that
$$
    \omega_{v_j|\parents(v_j)}
    =
    Q_{v_j|\parents(v_j)}.
$$
Thus the extrinsic kernel construction is specific to the chosen ordering and
need not produce a directed quantum Markov state.

Thus, in the noncommutative setting, the extrinsic construction is indexed by
a DAG together with a chosen topological ordering. The ordering specifies the
procedural order in which the local kernels are combined. Pairwise
commutativity is the special case in which the ordering disappears: the
$\star$-construction becomes an unordered product and, as
Corollary~\ref{cor:commuting-kernels} makes precise, the extrinsic and intrinsic
constructions coincide.

We also study when the extrinsic construction is independent of the chosen
ordering. A
family of kernels is called order-invariant if it gives the same density
operator for every topological ordering of the DAG. Order-invariance allows one
to compare the full-past conditionals obtained from different orderings and
thereby derive quantum conditional independences. We prove that transitivity is
the graph-theoretic condition under which order-invariant extrinsic kernel
constructions automatically have the intrinsic directed Markov interpretation.
For transitive DAGs, order-invariance forces directed quantum Markovness. For
non-transitive DAGs, there exist order-invariant families of positive definite
quantum kernels whose common state is not directed quantum Markov.

A complementary additive viewpoint is provided by logarithmic conditionals
$$
    h_{A|B}(\rho)=\log\rho_{A\cup B}-\log\rho_B.
$$
These are generally not logarithms of the direct conditional density
operators, but they linearize the Markov identities. A positive definite state
is directed quantum Markov with respect to a DAG if and only if
$$
    \log\rho
    =
    \sum_{v\in V}
    \left(
        \log\rho_{v\cup\parents(v)}
        -
        \log\rho_{\parents(v)}
    \right).
$$
Thus directed quantum Markovness is additive in logarithmic coordinates, giving
a noncommutative analogue of the classical identity
$$
    \log p(x_V)=\sum_{v\in V}\log p(x_v\cd x_{\parents(v)}).
$$
Moreover, for any positive definite state $\sigma$, the same formula defines
a logarithmic candidate
$$
    T_\dag(\sigma)
    =
    \exp\left\{
    \sum_{v\in V}
    \left(
        \log\sigma_{v\cup\parents(v)}
        -
        \log\sigma_{\parents(v)}
    \right)
    \right\}.
$$
We prove the two-sided trace bound
$$
    e^{-gI(\dag)_\sigma}
    \leq
    \trace\!\left(T_\dag(\sigma)\right)
    \leq
    1,
$$
where $gI(\dag)_\sigma$ is the excess global information of $\sigma$ relative
to $\dag$, defined in \eqref{eq:excess-global-information}. If
$\trace\!\left(T_\dag(\sigma)\right)=1$, then
$T_\dag(\sigma)$ is itself a directed quantum Markov state. The original state
$\sigma$ is directed quantum Markov if and only if
$T_\dag(\sigma)=\sigma$. This is the directed analogue of the
logarithmic-candidate viewpoint developed for quantum Markov completions in
\citet{lauritzen:zwiernik:26}.

The two state-based quantities introduced here also respect DAG Markov
equivalence: separation-equivalent DAGs have the same excess global
information and the same logarithmic candidate for every positive definite
input state, including non-Markov states.  For perfect orientations of a
chordal graph, this identifies the directed construction with the
clique--separator construction for the corresponding undirected model.

The paper is organized as follows. Section~\ref{sec:preliminaries} introduces
density operators, partial traces, conditional density operators, entropy,
logarithmic conditionals, and quantum conditional independence.
Section~\ref{sec:dag-markov} establishes the equivalence of the ordered, local, and
global directed Markov properties, derives the directed entropy relation, and
proves the intrinsic factorization theorem. Section~\ref{sec:extrinsic-kernels}
develops the extrinsic ordered kernel construction, illustrates the gap between
extrinsic kernels and intrinsic Markov structure, and studies order-invariant
kernel families. Section~\ref{sec:logarithmic-directed} gives the logarithmic
characterization and studies the associated logarithmic candidate.
Section~\ref{sec:markov-equivalence} relates directed and undirected quantum
Markov structures and establishes invariance under DAG Markov equivalence.

\section{Preliminaries}\label{sec:preliminaries}

In this section we describe our notation and collect a few elementary facts that will be
used repeatedly. Since several of our later arguments rely
on concrete matrix manipulations, we keep the presentation explicit.

\subsection{Basic setup}

Let $V$ be a finite set and let $\hilb_v$, $v\in V$, be finite-dimensional
complex Hilbert spaces. For $A\subseteq V$, write
$\hilb_A=\bigotimes_{v\in A}\hilb_v$, with the convention
$\hilb_\varnothing=\C$.
Let $\cL(\hilb_A)$ denote the space
of linear operators on $\hilb_A$. We equip $\cL(\hilb_A)$ with the
Hilbert--Schmidt inner product
$\<M,N\>:=\trace(MN^*)$. Write $\mathbb S(\hilb_A)$ for the real vector space of self-adjoint operators
on $\hilb_A$, $\mathbb S^+(\hilb_A)$ for the cone of positive definite
operators, and
$$\mathbb S_1^+(\hilb_A):=
\{\rho\in\cL(\hilb_A):\rho=\rho^*,\ \rho\succ0,\ \trace(\rho)=1\}$$
for the set of positive definite density operators.  Whenever an operator on
$\hilb_A$ is used on a larger tensor product, the identity on the
complementary subsystem is understood.
We shall use the following operation on the positive definite cone.

\begin{definition}
For $M,N\in\mathbb S^+(\hilb)$, define
$$
    M\star N:=N^{1/2}MN^{1/2}.
$$
\end{definition}
The operation $\star$ is generally neither commutative nor associative. It
will be used throughout for conditional reconstruction. Moreover,
$M\star N=MN$ if and only if $M$ and $N$ commute. As in
\citet{lauritzen:zwiernik:26}, we have for notational convenience reversed the
roles of $M$ and $N$ relative to \citet{leifer_poulin_2008}.

\subsection{Partial trace}

The quantum analogue of marginalization is the partial trace. Thus, whenever
we speak of a marginal of a density operator, we mean the corresponding
reduced density operator obtained by tracing out the complementary subsystem.
We recall the basic facts we need; see, for example,
\citet[Section~2.4.3]{nielsen:chuang:00}.

Let $A,B\subseteq V$ be disjoint finite sets. The \emph{partial trace over $A$} is the
unique linear map
$\trace_A(\,\cdot\,):\cL(\hilb_{A\cup B})\to \cL(\hilb_B)$
such that
$$
\trace\bigl((I_A\otimes M)\rho\bigr)
=
\trace\bigl(M\,\trace_A(\rho)\bigr)
$$
for all operators $\rho$ and $M$ on the indicated spaces. Equivalently, under
the identification
$\cL(\hilb_{A\cup B})\cong \cL(\hilb_A)\otimes \cL(\hilb_B)$,
it is the linear map determined by
$$
\trace_A(X\otimes Y)=\trace(X)\,Y,
\qquad
X\in\cL(\hilb_A),\ Y\in\cL(\hilb_B).
$$

The partial trace is linear, positive, and trace-preserving.  We use the terms
\emph{marginal} and \emph{reduced density operator} interchangeably.
Marginalization is consistent: if $D\subseteq E\subseteq V$, then
$$
    \trace_{E\setminus D}\!\left(\trace_{V\setminus E}(\rho)\right)
    =
    \trace_{V\setminus D}(\rho).
$$

We shall also use the following standard pull-out property; see
\citet[Lemma~2.3]{lauritzen:zwiernik:26}.

\begin{lem}[Pull-out property]\label{lem:pull-out}
Let $A,B,C$ be pairwise disjoint, let $\rho\in\cL(\hilb_{A\cup B\cup C})$, and
let $M\in\cL(\hilb_{B\cup C})$. Then
$$\trace_A\bigl((I_A\otimes M)\rho\bigr)=M\,\trace_A(\rho),\qquad
\trace_A\bigl(\rho(I_A\otimes M)\bigr)=\trace_A(\rho)\,M.$$
\end{lem}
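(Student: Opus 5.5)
The plan is to reduce to elementary tensors and then extend by linearity. Both sides of each identity are linear in $\rho$, and $\cL(\hilb_{A\cup B\cup C})\cong\cL(\hilb_A)\otimes\cL(\hilb_{B\cup C})$ is spanned by products $X\otimes Y$ with $X\in\cL(\hilb_A)$ and $Y\in\cL(\hilb_{B\cup C})$. It therefore suffices to verify the first identity for $\rho=X\otimes Y$. In that case
$$
(I_A\otimes M)(X\otimes Y)=X\otimes (MY),
$$
and the defining property of the partial trace gives
$$
\trace_A\bigl(X\otimes MY\bigr)=\trace(X)\,MY=M\bigl(\trace(X)\,Y\bigr)=M\,\trace_A(X\otimes Y).
$$
The second identity follows in the same way from $(X\otimes Y)(I_A\otimes M)=X\otimes (YM)$.

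The one point requiring care is the identification of $\hilb_{A\cup B\cup C}$ with $\hilb_A\otimes\hilb_{B\cup C}$, up to the canonical reordering of tensor factors. This is needed so that $I_A\otimes M$ acts factorwise as written. Partial traces are compatible with this canonical isomorphism, so the computation above is legitimate. Nothing beyond bilinearity of the tensor product is used, so I expect no real obstacle.

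As an alternative check, one can use the trace characterization. For any $N\in\cL(\hilb_{B\cup C})$,
$$
\trace\bigl(N\,\trace_A((I_A\otimes M)\rho)\bigr)=\trace\bigl((I_A\otimes NM)\rho\bigr)=\trace\bigl(NM\,\trace_A(\rho)\bigr).
$$
Since the trace pairing is nondegenerate, this gives the first identity. The second identity follows either by the same argument or by taking adjoints of the first with $M^*$ and $\rho^*$, using that $\trace_A(\rho^*)=\trace_A(\rho)^*$.
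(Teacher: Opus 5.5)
Your proof is correct and complete. The paper gives no argument of its own here and simply cites Lemma~2.3 of Lauritzen and Zwiernik. Your two verifications use exactly the two descriptions of $\trace_A$ that the paper states just before the lemma:
\begin{itemize}
\item the elementary-tensor formula $\trace_A(X\otimes Y)=\trace(X)\,Y$, extended by linearity;
\item the duality identity $\trace\bigl((I_A\otimes M)\rho\bigr)=\trace\bigl(M\,\trace_A(\rho)\bigr)$.
\end{itemize}
The duality route is slightly cleaner, because it avoids any bookkeeping about reordering tensor factors. It needs only $(I_A\otimes N)(I_A\otimes M)=I_A\otimes NM$ and the nondegeneracy of the trace pairing.
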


\subsection{Entropy and logarithmic conditionals}

We use the following information-theoretic quantities. The
von Neumann entropy of a density operator $\rho\in \mathbb S_1^+(\hilb)$ is
$$
    \ent(\rho):=-\trace(\rho\log\rho).
$$
For positive definite density operators $\rho$ and $\sigma$ on the same
Hilbert space, their quantum relative entropy is
$$
    D(\rho\|\sigma)
    :=
    \trace\!\left(\rho(\log\rho-\log\sigma)\right).
$$
Klein's inequality states that $D(\rho\|\sigma)\geq0$, with equality if and
only if $\rho=\sigma$.
For a state $\rho$ and $A\subseteq V$, we write
$$
    \ent(A)_\rho:=\ent(\rho_A).
$$
We use the convention $\ent(\varnothing)_\rho=0$.
If $A$ and $B$ are disjoint, the quantum conditional entropy is
$$
    \ent(A\cd B)_\rho
    :=
    \ent(A\cup B)_\rho-\ent(B)_\rho.
$$
The mutual information is
$$
    I(A:B)_\rho
    :=
    \ent(A)_\rho+\ent(B)_\rho-\ent(A\cup B)_\rho.
$$
For the logarithmic constructions below, we use a conditional object adapted
to additive identities. If $\rho_{A\cup B}$ is positive definite, we define the
\emph{logarithmic conditional} of $A$ given $B$ by
\begin{equation}\label{eq:logcond}
    h_{A|B}(\rho)
    :=
    \log\rho_{A\cup B}-\log\rho_B,
\end{equation}
where $\log\rho_B$ is embedded into $\cL(\hilb_{A\cup B})$ by tensoring with
the identity on $A$.

In the classical case, $h_{A|B}=\log p(a,b)-\log p(b)$ is the logarithm of
the usual conditional probability.
The following entropic identity follows from the definition. Since
$\trace_A(\rho_{A\cup B})=\rho_B$, we have
\begin{equation}\label{eq:logcond-expectation}
    \trace\bigl(\rho_{A\cup B}h_{A|B}(\rho)\bigr)
    =
    \trace(\rho_{A\cup B}\log\rho_{A\cup B})
    -
    \trace(\rho_B\log\rho_B)
    =
    -\ent(A\cd B)_\rho.
\end{equation}
Thus the expectation of a logarithmic conditional is the negative conditional
entropy.

\subsection{Quantum conditional independence}
\label{subsec:qci}

Let $A,B,C\subseteq V$ be pairwise disjoint and let
$\rho\in\mathbb S_1^+(\hilb_{A\cup B\cup C})$. The \emph{quantum conditional
mutual information} is
$$
    I(A:B\cd C)_\rho
    :=
    \ent(A\cup C)_\rho
    +
    \ent(B\cup C)_\rho
    -
    \ent(C)_\rho
    -
    \ent(A\cup B\cup C)_\rho.
$$
Equivalently, conditional mutual information is the decrease in conditional
entropy produced by conditioning additionally on $B$:
\begin{equation}\label{eq:cmi-conditional-entropy}
    I(A:B\cd C)_\rho
    =
    \ent(A\cd C)_\rho-\ent(A\cd B\cup C)_\rho.
\end{equation}
This quantity is always
nonnegative by strong subadditivity \citep{lieb_ruskai_1973}.

\begin{definition}
We say that $A$ and $B$ are \emph{quantum conditionally independent given $C$}
with respect to $\rho$ if $I(A:B\cd C)_\rho=0$. In this case we write
$$
    A\ciph B\cd C\,[\rho],
$$
or just $A\ciph B\cd C$ when there is no ambiguity.
\end{definition}

Quantum conditional independence satisfies the semi-graphoid axioms, as shown
in \citet{leifer_poulin_2008}. Thus:
\begin{enumerate}[(Q1)]
    \item $A\ciph B\cd C\implies B\ciph A\cd C$ \emph{(symmetry)};
    \item $A\ciph B\cd C$ and $D\subseteq B$ imply
    $A\ciph D\cd C$ \emph{(decomposition)};
    \item $A \ciph (B \cup D) \cd C\implies
    A \ciph B \cd (C\cup D)$ \emph{(weak union)};
    \item $A\ciph B \cd C$ and $A \ciph D \cd (B\cup C)$ imply
    $A \ciph (B\cup D) \cd C$ \emph{(contraction)}.
\end{enumerate}
In fact, these properties follow directly from strong subadditivity of the
von Neumann entropy \citep[Lemma~5.1]{studeny:05}.

If the relevant density
operators are positive definite, $\ciph$ also satisfies intersection, that is,
it is a graphoid independence model satisfying
$$
\text{\rm{(Q5)}}\quad
A\ciph B \cd (C \cup D)
\text{ and }
A\ciph D\cd (B\cup C)
\implies
A\ciph (B\cup D) \cd C,
$$
see \citet[Proposition~2.5]{lauritzen:zwiernik:26}. The equivalence of the
directed Markov properties and the entropy criterion below use only the
semi-graphoid properties. More generally, the results in the present paper
extend to density operators that are only positive semidefinite by keeping
proper track of their supports. The only changes are the usual support
projections in the conditional-operator identities and the limiting forms of
the entropy inequalities. We omit the details.

\subsection{Conditional density operators and quantum kernels}

For disjoint $A,B\subseteq V$ and
$\rho\in\mathbb S_1^+(\hilb_{A\cup B})$ we define the \emph{direct conditional density
operator}
\begin{equation}\label{eq:direct-conditional}
\rho_{A|B}:=
\rho_B^{-1/2}\rho_{A\cup B}\rho_B^{-1/2}.
\end{equation}
This is one of several possible noncommutative conditional objects; see
\citet{leifer_poulin_2008}. Here it separates a joint state into a conditional
part and a marginal part:
\begin{equation*}
\rho_{A\cup B}=\rho_{A|B}\star\rho_B,
\end{equation*}
where we recall that
$M\star N:=N^{1/2}MN^{1/2}$.
The pull-out property gives $\trace_A(\rho_{A|B})=I_B$, so the conditional
operator has the normalization of a classical Markov kernel.  We call any
positive definite operator $Q_{A|B}$ on $\hilb_{A\cup B}$ satisfying
$\trace_A(Q_{A|B})=I_B$ a \emph{quantum kernel} from $B$ to $A$.

After tensoring with the appropriate identities, such a kernel gives a
state-extension rule through the $\star$-product.  This rule is nonlinear in
the input state and should not be confused with the usual linear notion of a
quantum channel.
\begin{lem}\label{lem:conditional-reconstruction}
Let $A,B,C$ be pairwise disjoint, let $Q_{A|C}$ be a quantum kernel, and let
$\sigma_{B\cup C}$ be a positive definite density operator. Define
\begin{equation}\label{eq:margcondcombine}
\omega:=Q_{A|C}\star\sigma_{B\cup C},
\end{equation}
where both factors are embedded into $\hilb_{A\cup B\cup C}$ in the usual way.
Then
$\omega\in \mathbb S_1^+(\hilb_{A\cup B\cup C})$ is a density operator with
$$
\omega_{B\cup C}=\sigma_{B\cup C}
\qquad\text{and}\qquad
\omega_{A|B\cup C}=Q_{A|C},
$$
where the second identity uses the usual embedding of $Q_{A|C}$ into
$\hilb_{A\cup B\cup C}$.
\end{lem}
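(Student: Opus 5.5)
The plan is to verify the three claims directly from the definition $\omega=\sigma_{B\cup C}^{1/2}(I_B\otimes Q_{A|C})\sigma_{B\cup C}^{1/2}$, where $\sigma_{B\cup C}^{1/2}$ is understood as $I_A\otimes\sigma_{B\cup C}^{1/2}$. The argument has three short steps.

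\emph{Positive definiteness and self-adjointness.} The operator $I_B\otimes Q_{A|C}$ is positive definite, since it is a tensor product of positive definite operators. The operator $\sigma_{B\cup C}^{1/2}$, tensored with $I_A$, is invertible and self-adjoint. Hence $\omega$ is a congruence $X^*PX$ of a positive definite $P$ by an invertible $X$. It is therefore self-adjoint and positive definite.

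\emph{The marginal identity.} Apply $\trace_A$ and use the pull-out property (Lemma~\ref{lem:pull-out}) twice, with $M=\sigma_{B\cup C}^{1/2}$ acting on $B\cup C$, once on the left and once on the right. This gives
$$
\omega_{B\cup C}=\trace_A(\omega)=\sigma_{B\cup C}^{1/2}\,\trace_A\bigl(I_B\otimes Q_{A|C}\bigr)\,\sigma_{B\cup C}^{1/2}.
$$
By the defining property of the partial trace on product operators, $\trace_A(I_B\otimes Q_{A|C})=I_B\otimes\trace_A(Q_{A|C})=I_B\otimes I_C$. Hence $\omega_{B\cup C}=\sigma_{B\cup C}$, and taking the full trace gives $\trace(\omega)=\trace(\sigma_{B\cup C})=1$. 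This establishes $\omega\in\mathbb S_1^+(\hilb_{A\cup B\cup C})$.

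\emph{The conditional identity.} By the definition \eqref{eq:direct-conditional} and the marginal identity just proved,
$$
\omega_{A|B\cup C}=\omega_{B\cup C}^{-1/2}\,\omega\,\omega_{B\cup C}^{-1/2}=\sigma_{B\cup C}^{-1/2}\sigma_{B\cup C}^{1/2}(I_B\otimes Q_{A|C})\sigma_{B\cup C}^{1/2}\sigma_{B\cup C}^{-1/2}=I_B\otimes Q_{A|C}.
$$
This is $Q_{A|C}$ in the usual embedding.

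No step is a genuine obstacle. The only point needing care is bookkeeping of the embeddings. We must check that the pull-out lemma applies with the operator supported on $B\cup C$, which is disjoint from the traced system $A$. We must also check that the partial trace over $A$ of $I_B\otimes Q_{A|C}$ factors as claimed. The latter can be confirmed by writing $Q_{A|C}=\sum_k X_k\otimes Y_k$ with $X_k\in\cL(\hilb_A)$ and $Y_k\in\cL(\hilb_C)$, and using linearity.
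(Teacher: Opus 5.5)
Your proof is correct and follows essentially the same route as the paper: the pull-out property applied with $\sigma_{B\cup C}^{1/2}$ on both sides, the normalization $\trace_A(Q_{A|C})=I_C$, and cancellation of $\sigma_{B\cup C}^{\pm1/2}$ in the conditional. Your version adds explicit justification of positive definiteness via congruence and of the factorization $\trace_A(I_B\otimes Q_{A|C})=I_B\otimes I_C$, which the paper leaves implicit.
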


\begin{proof}
Since $\sigma_{B\cup C}$ acts only on the subsystem not traced out, the
pull-out property gives
$$
\trace_A(X\star\sigma_{B\cup C})
=
\trace_A(X)\star\sigma_{B\cup C}
$$
for every positive operator $X$ on $\hilb_{A\cup B\cup C}$. Hence
$$
\omega_{B\cup C}
=
\trace_A(Q_{A|C}\star\sigma_{B\cup C})
=
\trace_A(Q_{A|C})\star\sigma_{B\cup C}
=
(I_B\otimes I_C)\star\sigma_{B\cup C}
=
\sigma_{B\cup C}.
$$
Thus $\omega$ has trace one, and it is positive definite by construction.
Finally,
$$
\omega_{A|B\cup C}
=
\sigma_{B\cup C}^{-1/2}\omega\,\sigma_{B\cup C}^{-1/2}
=
Q_{A|C},
$$
again with the usual embedding.
\end{proof}

Conditional independence can be identified through either of the two
conditional objects introduced above; see Theorems~3.3, 3.5, and~3.6 of
\citet{leifer_poulin_2008}.
\begin{prop}\label{prop:one-sided-identities-fixed-state}
Let $\rho\in\mathbb S_1^+(\hilb_{A\cup B\cup C})$. The following statements
are equivalent:
\begin{enumerate}
    \item [(i)] $A\ciph B\cd C$,
    \item [(ii)] $\rho_{A|B\cup C}=\rho_{A|C}$,
    \item [(iii)] $h_{A|B\cup C}(\rho)=h_{A|C}(\rho)$.
\end{enumerate}
\end{prop}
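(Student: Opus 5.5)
The plan is to prove the cycle (i)$\Rightarrow$(iii)$\Rightarrow$(ii)$\Rightarrow$(i), or rather to route everything through the equality case of strong subadditivity. The standard tool is Petz's characterization of equality in monotonicity of relative entropy, in the form used by Hayden--Jozsa--Petz--Winter. This says that $I(A:B\cd C)_\rho=0$ holds if and only if
$$
\log\rho_{A\cup B\cup C}+\log\rho_C=\log\rho_{A\cup C}+\log\rho_{B\cup C}.
$$
This holds if and only if the Petz recovery map restores $\rho$, which means
$$
\rho_{A\cup B\cup C}=\rho_{A\cup C}^{1/2}\rho_C^{-1/2}\rho_{B\cup C}\rho_C^{-1/2}\rho_{A\cup C}^{1/2}.
$$
The log identity is literally (iii) after rearranging:
$$
\log\rho_{A\cup B\cup C}-\log\rho_{B\cup C}=\log\rho_{A\cup C}-\log\rho_C .
$$
So (i)$\Leftrightarrow$(iii) will follow once I justify this equality case.

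I would prove (iii)$\Rightarrow$(i) directly without citation. Take the trace against $\rho_{A\cup B\cup C}$ and use \eqref{eq:logcond-expectation}, together with the pull-out/marginal consistency $\trace(\rho_{A\cup B\cup C}(\log\rho_{A\cup C}-\log\rho_C))=-\ent(A\cd C)_\rho$. This gives $-\ent(A\cd B\cup C)=-\ent(A\cd C)$, and hence $I(A:B\cd C)_\rho=0$ by \eqref{eq:cmi-conditional-entropy}. The converse (i)$\Rightarrow$(iii) is the genuinely hard direction. Here I would write $I(A:B\cd C)_\rho=D(\rho_{ABC}\|\exp(\log\rho_{AC}+\log\rho_{BC}-\log\rho_C))+\log\trace\exp(\cdots)$ up to sign bookkeeping. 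The trace term is at most $1$ by Lieb's triple-matrix inequality (Lieb's concavity), since $\trace \rho_{AC}\,\mathcal T_{\rho_C}(\rho_{BC})$ integrates to $1$. Both terms are then nonnegative, so vanishing of $I$ forces the relative entropy to vanish. Klein's inequality then gives $\rho_{ABC}=\exp(\log\rho_{AC}+\log\rho_{BC}-\log\rho_C)$, which is (iii).

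For (ii), the direct conditional identity reads $\rho_{BC}^{-1/2}\rho_{ABC}\rho_{BC}^{-1/2}=\rho_C^{-1/2}\rho_{AC}\rho_C^{-1/2}$, i.e.\ $\rho_{ABC}=\rho_{A|C}\star\rho_{BC}$. For (ii)$\Rightarrow$(i), I would use the structure theorem of Hayden--Jozsa--Petz--Winter in the opposite direction: via the recovery-map criterion, (ii) shows that the Petz map from $C$ to $BC$ recovers $\rho_{ABC}$ from $\rho_{AC}$. Indeed, $\rho_{BC}^{1/2}\rho_C^{-1/2}\rho_{AC}\rho_C^{-1/2}\rho_{BC}^{1/2}$ is exactly the Petz recovery $\mathcal R_{C\to BC}(\rho_{AC})$, and Petz's theorem gives equality in data processing, which is $I(A:B\cd C)=0$. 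For (i)$\Rightarrow$(ii), the same Petz theorem applied with the roles of $A$ and $B$ swapped (by symmetry (Q1)) yields this recovery identity, which is (ii).

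The main obstacle is therefore the equality-case theory (Petz recovery and the HJPW decomposition $\hilb_C=\bigoplus_j\hilb_{C_j^L}\otimes\hilb_{C_j^R}$ with $\rho=\bigoplus q_j\rho_{AC_j^L}\otimes\rho_{C_j^RB}$). Given that decomposition, (ii) and (iii) can be checked blockwise by direct computation, since all the relevant marginals are block-diagonal tensor products. I would simply invoke Theorems~3.3, 3.5 and 3.6 of \citet{leifer_poulin_2008}, as the paper indicates, and verify the computations on the decomposition.
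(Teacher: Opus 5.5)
Your proposal is correct and follows essentially the paper's route. As in the paper, you get (i)$\Leftrightarrow$(ii) from the Petz recovery characterization, rewriting (ii) as $\rho_{A\cup B\cup C}=\rho_{B\cup C}^{1/2}\rho_C^{-1/2}\rho_{A\cup C}\rho_C^{-1/2}\rho_{B\cup C}^{1/2}$, and (i)$\Leftrightarrow$(iii) from Ruskai's logarithmic equality condition for strong subadditivity. The only difference is that you prove this cited result rather than citing it: you trace (iii) against $\rho$ for (iii)$\Rightarrow$(i), and for (i)$\Rightarrow$(iii) you write $I(A:B\cd C)_\rho=D(\rho\|\tau)-\log Z$ with $Z=\trace\exp(\log\rho_{A\cup C}+\log\rho_{B\cup C}-\log\rho_C)\le1$ by Lieb's triple-matrix inequality, which is exactly Ruskai's own argument.
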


\begin{proof}
The identity $\rho_{A|B\cup C}=\rho_{A|C}$ is equivalent, after multiplying
on the left and right by
$\rho_{B\cup C}^{1/2}$, to
$$
    \rho_{A\cup B\cup C}
    =
    \rho_{B\cup C}^{1/2}
    \rho_C^{-1/2}
    \rho_{A\cup C}
    \rho_C^{-1/2}
    \rho_{B\cup C}^{1/2}.
$$
This is the Petz recovery characterization of equality in monotonicity of
relative entropy under partial trace; see, for example,
\citet[Proposition~A.11]{lauritzen:zwiernik:26}.  This proves
{\rm(i)}$\Leftrightarrow${\rm(ii)}.  By \eqref{eq:logcond}, the equivalence
of {\rm(i)} and {\rm(iii)} is the logarithmic equality condition for strong
subadditivity
\citep[Theorem~1]{ruskai_2002}:
$$
    A\ciph B\cd C
    \quad\Longleftrightarrow\quad
    \log\rho_{A\cup B\cup C}
    -
    \log\rho_{B\cup C}
    =
    \log\rho_{A\cup C}
    -
    \log\rho_C,$$
  completing the proof.
\end{proof}

\begin{rem}
Lemma~\ref{lem:conditional-reconstruction} gives
$\omega_{A|B\cup C}=Q_{A|C}$, but in general
$\omega_{A|C}\neq Q_{A|C}$.  Proposition~\ref{prop:one-sided-identities-fixed-state}
then shows why the extension need not satisfy $A\ciph B\cd C$.  The next
subsection isolates the obstruction.
\end{rem}

\subsection{Criterion for conditional independence}\label{sec:indep-crit}

Let $A,B,C$ be pairwise disjoint, let
$\sigma_{B\cup C}\in\mathbb S_1^+(\hilb_{B\cup C})$, and let $Q_{A|C}$ be
a quantum kernel from $C$ to $A$. Define the state on $A\cup B\cup C$ by
\eqref{eq:margcondcombine}. By
Lemma~\ref{lem:conditional-reconstruction},
$\omega_{B\cup C}=\sigma_{B\cup C}$. The issue is whether the same
conditional object is recovered from the $A\cup C$ marginal.

\begin{prop}[Star pull-out criterion]
\label{prop:star-pull-out-criterion}
With $\omega=Q_{A|C}\star\sigma_{B\cup C}$ as above, the following
statements are equivalent:
\begin{enumerate}[{\rm (i)}]
    \item $\omega_{A|C}=Q_{A|C}$;
    \item
    $
        Q_{A|C}\star \trace_B(\sigma_{B\cup C})
        =
        \trace_B\bigl(Q_{A|C}\star\sigma_{B\cup C}\bigr);
    $
    \item $A\ciph B\cd C$ with respect to $\omega$.
\end{enumerate}
\end{prop}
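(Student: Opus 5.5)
The plan is to prove (i)$\Leftrightarrow$(iii) and (i)$\Leftrightarrow$(ii) separately. Both reduce to facts already established: Lemma~\ref{lem:conditional-reconstruction} and Proposition~\ref{prop:one-sided-identities-fixed-state}, applied to the state $\omega$.

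For (i)$\Leftrightarrow$(iii), Lemma~\ref{lem:conditional-reconstruction} tells us that $\omega$ is a positive definite density operator on $\hilb_{A\cup B\cup C}$ with $\omega_{A|B\cup C}=Q_{A|C}$, where $Q_{A|C}$ is embedded by tensoring with $I_B$. By Proposition~\ref{prop:one-sided-identities-fixed-state} applied to $\rho=\omega$, the independence $A\ciph B\cd C$ holds if and only if $\omega_{A|B\cup C}=\omega_{A|C}$. Substituting the value of $\omega_{A|B\cup C}$, this reads $Q_{A|C}\otimes I_B=\omega_{A|C}\otimes I_B$, which is equivalent to $\omega_{A|C}=Q_{A|C}$ on $\hilb_{A\cup C}$, since tensoring with $I_B$ is injective. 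This proves (i)$\Leftrightarrow$(iii).

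For (i)$\Leftrightarrow$(ii), first record that $\omega_C=\sigma_C$. This holds because $\omega_{B\cup C}=\sigma_{B\cup C}$, and marginalization is consistent. Next, the right-hand side of (ii) is $\trace_B(\omega)=\omega_{A\cup C}$. The left-hand side is $Q_{A|C}\star\sigma_C=\sigma_C^{1/2}Q_{A|C}\sigma_C^{1/2}$, and by definition \eqref{eq:direct-conditional} we have $\omega_{A\cup C}=\omega_C^{1/2}\omega_{A|C}\omega_C^{1/2}=\sigma_C^{1/2}\omega_{A|C}\sigma_C^{1/2}$. So (ii) says
$$\sigma_C^{1/2}Q_{A|C}\sigma_C^{1/2}=\sigma_C^{1/2}\omega_{A|C}\sigma_C^{1/2}.$$
Since $\sigma_C$ is positive definite (it is a marginal of a positive definite state), $\sigma_C^{1/2}$ is invertible. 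Conjugating both sides by $\sigma_C^{-1/2}$ shows this is equivalent to $\omega_{A|C}=Q_{A|C}$, which is (i).

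There is no real obstacle: the analytic content is the Petz/strong-subadditivity equality characterization, which is already packaged in Proposition~\ref{prop:one-sided-identities-fixed-state}. Two points need care. First, the identification $\omega_C=\sigma_C$ is what allows the $\star$ with $\trace_B(\sigma_{B\cup C})$ to be inverted. Second, the embeddings must be tracked so that the comparison of $Q_{A|C}\otimes I_B$ with $\omega_{A|C}\otimes I_B$ legitimately reduces to the equality on $\hilb_{A\cup C}$.
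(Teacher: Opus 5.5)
Your proposal is correct and follows essentially the same route as the paper's proof. Both use Lemma~\ref{lem:conditional-reconstruction} to get $\omega_{A|B\cup C}=Q_{A|C}$ and $\omega_C=\sigma_C$, and invoke Proposition~\ref{prop:one-sided-identities-fixed-state} for (i)$\Leftrightarrow$(iii); both obtain (i)$\Leftrightarrow$(ii) by unwinding the definition of the direct conditional, where your explicit conjugation by the invertible $\sigma_C^{\pm1/2}$ just spells out what the paper leaves implicit.
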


\begin{proof}
Lemma~\ref{lem:conditional-reconstruction} gives
$\omega_{A|B\cup C}=Q_{A|C}$ and $\omega_C=\sigma_C$. Hence
Proposition~\ref{prop:one-sided-identities-fixed-state} shows that {\rm(iii)}
is equivalent to $\omega_{A|C}=Q_{A|C}$, which is {\rm(i)}. By
\eqref{eq:direct-conditional}, condition {\rm(i)} is in turn equivalent to
$
    \omega_{A\cup C}=Q_{A|C}\star\sigma_C.
$
Since
$\omega_{A\cup C}
=\trace_B\!\left(Q_{A|C}\star\sigma_{B\cup C}\right)$,
we obtain {\rm(ii)}.
\end{proof}

Thus the pull-out identity says that applying the kernel and then removing $B$
gives the same $A\cup C$ marginal as first removing $B$ and then applying the
kernel. Its failure is exactly the local obstruction to quantum conditional
independence.

\begin{cor}
\label{cor:commuting-pull-out}
If $Q_{A|C}$ and $\sigma_{B\cup C}$ commute after embedding into
$A\cup B\cup C$, then the equivalent conditions of
Proposition~\ref{prop:star-pull-out-criterion} hold.
\end{cor}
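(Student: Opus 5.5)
If $Q_{A|C}$ and $\sigma_{B\cup C}$ commute after embedding, I would verify condition (ii) of Proposition~\ref{prop:star-pull-out-criterion} directly. The first step is to turn the $\star$-product into an ordinary product. Write $Q=Q_{A|C}\otimes I_B$ and $\sigma=\sigma_{B\cup C}\otimes I_A$. Both are positive definite, and $\sigma^{1/2}$ is a polynomial in $\sigma$, so $Q$ commutes with $\sigma^{1/2}$. Hence
$$
    Q_{A|C}\star\sigma_{B\cup C}=\sigma^{1/2}Q\sigma^{1/2}=Q\sigma .
$$

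The second step is to compute the $A\cup C$ marginal. Since $Q_{A|C}$ acts on $\hilb_{A\cup C}$ and $B$ is the system being traced out, the pull-out property (Lemma~\ref{lem:pull-out}), with the roles of the lemma's $A$ and $B\cup C$ played by $B$ and $A\cup C$, gives
$$
    \trace_B(Q\sigma)=Q_{A|C}\,\trace_B(\sigma_{B\cup C}\otimes I_A)=Q_{A|C}\,(\sigma_C\otimes I_A).
$$

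The third step is to show that $Q_{A|C}$ commutes with $\sigma_C\otimes I_A$, since then this product equals $Q_{A|C}\star\sigma_C$ and (ii) holds. The tool is Lemma~\ref{lem:pull-out} applied on both sides of the trace:
$$
    Q_{A|C}\,\sigma_C=\trace_B(Q\sigma)=\trace_B(\sigma Q)=\sigma_C\,Q_{A|C},
$$
where the middle equality is the assumed commutation $Q\sigma=\sigma Q$. This gives $Q_{A|C}\star\sigma_C=\sigma_C^{1/2}Q_{A|C}\sigma_C^{1/2}=Q_{A|C}\sigma_C$. Combining the three steps,
$$
    \trace_B\bigl(Q_{A|C}\star\sigma_{B\cup C}\bigr)=Q_{A|C}\star\trace_B(\sigma_{B\cup C}),
$$
which is exactly condition (ii). Proposition~\ref{prop:star-pull-out-criterion} then yields the other two equivalent conditions.

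The step I expect to need the most care is the third one: commutation in the large space must descend to commutation with the marginal $\sigma_C$. The argument above settles it by applying the pull-out property on both sides of the trace, so I see no real obstacle. The only bookkeeping is to keep the identity embeddings consistent, namely that $Q_{A|C}$ is independent of $B$ and that $\sigma_C$ is embedded into $\hilb_{A\cup C}$.
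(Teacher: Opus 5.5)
Your proof is correct and follows the paper's argument step for step. You use commutativity to replace the $\star$-product by the ordinary product, take the partial trace over $B$ of the commutation relation to see that $Q_{A|C}$ commutes with $\sigma_C$, and then apply the pull-out property to obtain condition (ii); you also make explicit two details the paper leaves implicit, namely that $Q$ commutes with $\sigma^{1/2}$ and that the descent to $\sigma_C$ comes from applying both pull-out identities.
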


\begin{proof}
Commutativity gives
$Q_{A|C}\star\sigma_{B\cup C}=Q_{A|C}\sigma_{B\cup C}$.
Taking the partial trace of the commutation relation shows that
$Q_{A|C}$ also commutes with $\sigma_C$. Therefore the pull-out property gives
$$
    \trace_B\!\left(Q_{A|C}\star\sigma_{B\cup C}\right)
    =
    \trace_B\!\left(Q_{A|C}\sigma_{B\cup C}\right)
    =
    Q_{A|C}\sigma_C
    =
    Q_{A|C}\star\sigma_C.
$$
The conclusion follows from
Proposition~\ref{prop:star-pull-out-criterion}.
\end{proof}

\section{Intrinsic directed quantum Markov structure}
\label{sec:dag-markov}
\subsection{Directed acyclic graphs and separation}
A \emph{directed acyclic graph} (DAG) $\dag=(V,E)$ has directed edges and no
directed cycles.  For a vertex $v$, let $\parents(v)$, $\desc(v)$, and
$\nondesc(v)=V\setminus(\{v\}\cup\desc(v))$ denote its parents, descendants,
and non-descendants, respectively.

A non-endpoint vertex $\alpha_i$ on a walk (allowing repeated vertices)
$(a=\alpha_0,\ldots,\alpha_n=b)$ is a \emph{collider} on the walk if
$\alpha_{i-1}\to\alpha_i\leftarrow\alpha_{i+1}$.  The walk is
\emph{active} relative to $S\subseteq V$ if every non-collider on the walk
is outside $S$ and every collider is in $S$.
Otherwise the walk is
\emph{blocked} by $S$. Disjoint subsets $A,B,S\subseteq V$ are
\emph{$d$-separated by $S$} if every walk from $A$ to $B$ is blocked by
$S$; we then write $A\dse B\cd S$. See, for example,
\citet[Chapters~2 and~3]{lauritzen:26} for further details.

For later use, the \emph{skeleton} $\ske(\dag)$ is obtained by replacing every
arrow by an undirected edge and the \emph{moral graph} $\dag^m$ is obtained by
joining every pair of parents of a common child and then taking the skeleton.

\subsection{Equivalent Markov properties and the entropy criterion}
A total ordering $<$ of $V$ is \emph{topological} with respect to a DAG
$\dag$ if every arrow points from lower to higher.  Write
$\pre(v)=\{u\in V:u<v\}$ for the predecessors of $v$.  Whenever the ordering
is written as $v_1<\cdots<v_n$, put
$V_0=\varnothing$ and $V_j=\{v_1,\ldots,v_j\}$; thus
$\pre(v_j)=V_{j-1}$.  Given a DAG $\dag$, a topological ordering $<$, and a
state $\rho$, consider:
\begin{itemize}
\item[(O)] the \emph{ordered Markov property}:
$$
    v\ciph \bigl(\pre(v)\setminus\parents(v)\bigr)\cd\parents(v)
    \qquad (v\in V);
$$
\item[(L)] the \emph{local Markov property}:
$$
    v\ciph
    \bigl(\nondesc(v)\setminus\parents(v)\bigr)\cd\parents(v)
    \qquad (v\in V);
$$
\item[(G)] the \emph{global Markov property}:
$$
    A\dse B\cd S
    \quad\Longrightarrow\quad
    A\ciph B\cd S.
$$
\end{itemize}

These conditions describe Markovness at three different levels: {\rm(O)} uses
a chosen topological ordering, {\rm(L)} uses the parents and non-descendants
of each vertex, and {\rm(G)} uses all $d$-separation statements in the DAG.
Nevertheless, they define the same restriction on quantum density operators:

\begin{thm}[Equivalence of directed Markov properties]
\label{thm:dagmarkeq}
Let $\dag$ be a DAG, let $<$ be a topological ordering, and let
$\rho\in\mathbb S_1^+(\hilb_V)$.  Then {\rm(O)}, {\rm(L)}, and {\rm(G)} are
equivalent.
\end{thm}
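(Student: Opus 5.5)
The plan is the classical semi-graphoid route. The paper notes that only (Q1)–(Q4) are used, so the proof reduces to pure independence-model reasoning about $d$-separation. I would establish the cycle (G)$\Rightarrow$(L)$\Rightarrow$(O)$\Rightarrow$(G).

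\emph{(G)$\Rightarrow$(L).} It suffices to check the $d$-separation $v\dse(\nondesc(v)\setminus\parents(v))\cd\parents(v)$. Take a walk from $v$ to a non-descendant $w\notin\parents(v)$. If it starts $v\leftarrow u$, then $u\in\parents(v)$ is a non-collider in the conditioning set, so the walk is blocked. If it starts $v\to u$, follow the directed path out of $v$. Either the walk reaches a collider, which must be a descendant of $v$ and hence not in $\parents(v)$, so the walk is blocked. Or it keeps going forward and can never reach the non-descendant $w$; in that case, the first time it turns back it creates a collider in $\desc(v)$, which again blocks it.

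\emph{(L)$\Rightarrow$(O).} This is immediate by decomposition (Q2), since $\pre(v)\subseteq\nondesc(v)$ for a topological ordering.

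\emph{(O)$\Rightarrow$(G).} This is the main obstacle. I would argue by induction on $n=|V|$, removing the last vertex $v_n$, which is a sink.

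For the inductive step, note that the marginal $\rho_{V_{n-1}}$ satisfies (O) for the induced DAG $\dag_{V_{n-1}}$. Hence, by induction, it satisfies (G) for that subgraph. Now take a $d$-separation $A\dse B\cd S$ in $\dag$. First I would reduce to the case $A\cup B\cup S=V$: extend $A$ and $B$ maximally, using the standard fact that $d$-separation of $A,B$ by $S$ extends to a partition of the ancestral set $\mathrm{an}(A\cup B\cup S)$. After that, decomposition (Q2) recovers the original statement.

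Working inside the ancestral set, which is again a DAG with the inherited ordering and on which (O) restricts by decomposition, I may assume $V=A\cup B\cup S$. Then consider where $v_n$ lies.

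\emph{Case $v_n\in A$ (or symmetrically $B$).} The parents of $v_n$ cannot lie in $B$, since an edge is an active walk. So $\parents(v_n)\subseteq A\cup S$. The graph $\dag_{V\setminus v_n}$ gives $A\setminus v_n\dse B\cd S$, because removing a sink only removes walks. Induction then gives $(A\setminus v_n)\ciph B\cd S$. Next, (O) for $v_n$ combined with weak union (Q3) gives $v_n\ciph B\cd (A\setminus v_n)\cup S$. Contraction (Q4), together with symmetry (Q1), yields $A\ciph B\cd S$.

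\emph{Case $v_n\in S$.} This is the delicate one. Since $v_n$ is a collider-only sink, its parents cannot meet both $A$ and $B$ when $v_n$ is conditioned on. Indeed, $a\to v_n\leftarrow b$ would be active, so we may assume $\parents(v_n)\subseteq A\cup S'$ with $S'=S\setminus v_n$. In the smaller graph we need $A\dse B\cd S'$. I would verify this by showing that any walk active given $S'$ in $\dag_{V\setminus v_n}$ is also active given $S$ in $\dag$, or can be rerouted to such a walk. Conditioning on $v_n$ additionally can only matter through colliders that are ancestors of $v_n$, and walks are allowed to pass through the sink. Induction then gives $A\ciph B\cd S'$. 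From (O), $v_n\ciph B\cd A\cup S'$, and contraction gives $A\cup\{v_n\}\ciph B\cd S'$. Weak union then yields $A\ciph B\cd S$.

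Checking the graph-theoretic claims in this last case is the step I expect to require the most care. If the walk-based argument gets messy, I would instead use the moralization criterion on ancestral sets, namely that $d$-separation equals separation in $(\dag_{\mathrm{an}(A\cup B\cup S)})^m$, to make the case analysis clean.
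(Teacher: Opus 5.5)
Your proposal is correct and takes the same route as the paper. The paper's proof only observes that quantum conditional independence satisfies (Q1)--(Q4) and then cites the classical equivalence theorem for directed Markov properties in semi-graphoids (Lauritzen et al.\ 1990). Your cycle (G)$\Rightarrow$(L)$\Rightarrow$(O)$\Rightarrow$(G), with induction on the sink $v_n$ after reducing to a partition of $\mathrm{an}(A\cup B\cup S)$, is a valid self-contained proof of that cited result.

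The step you flag as delicate is in fact immediate under the paper's walk-based definition of $d$-separation. A walk in $\dag_{V\setminus v_n}$ never visits $v_n$, so it is active given $S'$ there exactly when it is active given $S=S'\cup\{v_n\}$ in $\dag$. No rerouting or moralization argument is needed.
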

\begin{proof}
By Section~\ref{subsec:qci}, quantum conditional independence satisfies the
semi-graphoid axioms.  The standard equivalence theorem for directed Markov
properties in any semi-graphoid independence model therefore applies
\citep{lauritzen:etal:90}.
\end{proof}
Because the global and local properties are independent of the ordering,
Theorem~\ref{thm:dagmarkeq} also shows that the ordered property holds for one
topological ordering if and only if it holds for every topological ordering.
We henceforth call such a state \emph{directed quantum Markov} with respect to
$\dag$.

In analogy with \citet{lauritzen:zwiernik:26}, define the \emph{excess global
information} of $\rho$ relative to $\dag$ by
\begin{equation}\label{eq:excess-global-information}
    gI(\dag)_\rho
    :=
    \sum_{v\in V}\ent(v\cd\parents(v))_\rho-\ent(V)_\rho.
\end{equation}
By \eqref{eq:logcond-expectation}, the same quantity has the logarithmic
representation
\begin{equation}\label{eq:excess-global-information-logarithmic}
    gI(\dag)_\rho
    =
    -\sum_{v\in V}
    \trace\!\left(\rho h_{v|\parents(v)}(\rho)\right)
    +
    \trace\!\left(\rho\log\rho\right).
\end{equation}
The entropy form \eqref{eq:excess-global-information} is convenient for
proving nonnegativity, whereas
\eqref{eq:excess-global-information-logarithmic} is adapted to additive
logarithmic arguments.
Fix a topological ordering $v_1<\cdots<v_n$.  Since
$\ent(v_j\cd V_{j-1})_\rho
=\ent(V_j)_\rho-\ent(V_{j-1})_\rho$, the entropy chain rule telescopes to
\begin{equation}\label{eq:entropy-chain-rule-topological}
    \ent(V)_\rho
    =
    \sum_{j=1}^n\ent(v_j\cd V_{j-1})_\rho.
\end{equation}
Replacing \eqref{eq:entropy-chain-rule-topological} in
\eqref{eq:excess-global-information} and using
\eqref{eq:cmi-conditional-entropy} gives the key decomposition
\begin{equation}\label{eq:ordered-cmi-decomposition}
    gI(\dag)_\rho
    =
    \sum_{j=1}^n
    I\!\left(v_j:V_{j-1}\setminus\parents(v_j)
        \cd\parents(v_j)\right)_\rho.
\end{equation}
Although its summands depend on the chosen topological ordering, their sum
does not.

\begin{thm}[Directed entropy criterion]
\label{thm:entropy_relation}
For every DAG $\dag$ and every
$\rho\in\mathbb S_1^+(\hilb_V)$, $gI(\dag)_\rho\geq0$.
Consequently, $gI(\dag)_\rho=0$ if and only if $\rho$ satisfies any one--and
hence all--of the directed Markov properties {\rm(O)}, {\rm(L)} and {\rm(G)}.
\end{thm}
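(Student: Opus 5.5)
The plan is to work entirely with the ordered decomposition \eqref{eq:ordered-cmi-decomposition}, which has already been derived for any fixed topological ordering $v_1<\cdots<v_n$. Each summand
$$
    I\!\left(v_j:V_{j-1}\setminus\parents(v_j)\cd\parents(v_j)\right)_\rho
$$
is a quantum conditional mutual information. It is therefore nonnegative by strong subadditivity \citep{lieb_ruskai_1973}, as recalled in Section~\ref{subsec:qci}. Summing over $j$ immediately gives $gI(\dag)_\rho\geq0$. Before doing this, I will double-check one point in the derivation of \eqref{eq:ordered-cmi-decomposition}: that $\ent(v_j\cd\parents(v_j))_\rho-\ent(v_j\cd V_{j-1})_\rho$ is exactly the displayed conditional mutual information. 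This follows from \eqref{eq:cmi-conditional-entropy} with $A=\{v_j\}$, $C=\parents(v_j)$ and $B=V_{j-1}\setminus\parents(v_j)$, using $\parents(v_j)\subseteq V_{j-1}$ because the ordering is topological.

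For the characterization, the key observation is that a finite sum of nonnegative reals vanishes if and only if every term vanishes. Hence $gI(\dag)_\rho=0$ holds precisely when, for every $j$,
$$
    v_j\ciph\bigl(V_{j-1}\setminus\parents(v_j)\bigr)\cd\parents(v_j).
$$
Since $V_{j-1}=\pre(v_j)$, this is literally the ordered Markov property (O) for the chosen ordering. Theorem~\ref{thm:dagmarkeq} then upgrades (O) to (L) and (G). Conversely, any one of the three properties implies (O) for this ordering, and hence implies $gI(\dag)_\rho=0$.

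There is no real obstacle; the substantive inputs are strong subadditivity and Theorem~\ref{thm:dagmarkeq}. The only subtlety is that $gI(\dag)_\rho$ is defined in \eqref{eq:excess-global-information} without reference to any ordering, whereas the argument fixes one. This is harmless: every DAG admits a topological ordering, \eqref{eq:ordered-cmi-decomposition} holds for each of them, and Theorem~\ref{thm:dagmarkeq} shows that (O) is independent of which ordering is chosen.
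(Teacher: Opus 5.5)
Your proposal is correct and follows the paper's proof essentially verbatim: nonnegativity of each conditional mutual information in \eqref{eq:ordered-cmi-decomposition} by strong subadditivity, vanishing of the sum exactly when (O) holds for the chosen ordering, and Theorem~\ref{thm:dagmarkeq} to pass to (L) and (G). Your extra check of the decomposition via \eqref{eq:cmi-conditional-entropy} and your remark that the ordering-free definition of $gI(\dag)_\rho$ makes the choice of ordering harmless are both sound.
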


\begin{proof}
Every term in \eqref{eq:ordered-cmi-decomposition} is nonnegative by strong
subadditivity.  Their sum vanishes exactly when all ordered Markov
restrictions hold, which by Theorem~\ref{thm:dagmarkeq} is equivalent to
directed quantum Markovness.
\end{proof}

\subsection{Intrinsic factorization}

We next relate the Markov properties to an intrinsic factorization in terms
of the state's own conditional density operators.

Fix a topological ordering $<$ of $\dag$ and write it as
$v_1<\cdots <v_n$. For any
$\rho\in\mathbb S_1^+(\hilb_V)$ define, for $j=2,\ldots,n$,
$$
    \rho_{v_j|V_{j-1}}
    :=
    \rho_{V_{j-1}}^{-1/2}\rho_{V_j}\rho_{V_{j-1}}^{-1/2}.
$$
Then $\rho_{V_j}=\rho_{v_j|V_{j-1}}\star\rho_{V_{j-1}}$, and iteration gives
\begin{equation}\label{eq:chain}
        \rho
    =
    \rho_{v_n|V_{n-1}}
    \star
    \bigl(
    \rho_{v_{n-1}|V_{n-2}}
    \star
    \cdots
    \star
    (\rho_{v_2|V_1}\star\rho_{v_1})
    \cdots
    \bigr).
\end{equation}
The parentheses are part of the formula, since $\star$ is not associative in
general.

\begin{definition}[Intrinsic directed factorization]
Let $\dag$ be a DAG, let $<$ be a topological ordering, and let
$\rho\in\mathbb S_1^+(\hilb_V)$. We say that $\rho$
\emph{factorizes intrinsically over $(\dag,<)$} if, writing
$v_1<\cdots<v_n$,
$$
    \rho
    =
    \rho_{v_n|\parents(v_n)}
    \star
    \bigl(
    \rho_{v_{n-1}|\parents(v_{n-1})}
    \star
    \cdots
    \star
    (\rho_{v_2|\parents(v_2)}\star\rho_{v_1})
    \cdots
    \bigr),
$$
where each conditional density operator $\rho_{v|\parents(v)}$ is computed
from the corresponding marginal of $\rho$ and then embedded into the full
ordered past by tensoring with the identity.
\end{definition}

\begin{samepage}
\begin{thm}[Intrinsic factorization theorem]
\label{thm:directed-factorization-equivalence}
Let $\dag$ be a DAG, $<$ a topological ordering, and let
$\rho\in\mathbb S_1^+(\hilb_V)$. Then the following statements are equivalent:
\begin{enumerate}
    \item[(i)] $\rho$ factorizes intrinsically over $(\dag,<)$;
    \item[(ii)] $\rho$ satisfies the ordered Markov property {\rm(O)};
    \item[(iii)] $\rho$ satisfies the local Markov property {\rm(L)};
    \item[(iv)] $\rho$ satisfies the global Markov property {\rm(G)}.
\end{enumerate}
\end{thm}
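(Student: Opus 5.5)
By Theorem~\ref{thm:dagmarkeq}, (ii), (iii) and (iv) are already equivalent. It therefore suffices to prove (ii)$\Rightarrow$(i) and (i)$\Rightarrow$(ii). Throughout we work with the fixed ordering $v_1<\cdots<v_n$ and the exact chain \eqref{eq:chain}. Both directions rest on comparing the exact conditional $\rho_{v_j|V_{j-1}}$ with the embedded parent conditional $\rho_{v_j|\parents(v_j)}\otimes I_{V_{j-1}\setminus\parents(v_j)}$.

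For (ii)$\Rightarrow$(i), fix $j$ and apply Proposition~\ref{prop:one-sided-identities-fixed-state} to the marginal $\rho_{V_j}$ with $A=\{v_j\}$, $B=V_{j-1}\setminus\parents(v_j)$ and $C=\parents(v_j)$. The ordered Markov statement $v_j\ciph B\cd C$ then gives $\rho_{v_j|V_{j-1}}=\rho_{v_j|\parents(v_j)}$, with the usual embedding. Substituting this into every factor of \eqref{eq:chain} yields the intrinsic factorization directly. This direction is essentially immediate.

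For (i)$\Rightarrow$(ii), the difficulty is that the factorization is only an identity for the full state $\rho$. We do not yet know that the intermediate objects are the marginals $\rho_{V_j}$. Define recursively $\omega^{(1)}:=\rho_{v_1}$ and $\omega^{(j)}:=\rho_{v_j|\parents(v_j)}\star\omega^{(j-1)}$ on $\hilb_{V_j}$, so that (i) says $\omega^{(n)}=\rho$. Each $\rho_{v_j|\parents(v_j)}$ is a quantum kernel, since its partial trace over $v_j$ is $I_{\parents(v_j)}$ by the pull-out property. Lemma~\ref{lem:conditional-reconstruction}, applied with $A=\{v_j\}$, $C=\parents(v_j)$, $B=V_{j-1}\setminus\parents(v_j)$ and $\sigma=\omega^{(j-1)}$, then shows three things. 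First, $\omega^{(j)}$ is a positive definite density operator. Second, $\trace_{v_j}\omega^{(j)}=\omega^{(j-1)}$. Third, $\omega^{(j)}_{v_j|V_{j-1}}=\rho_{v_j|\parents(v_j)}$.

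The key step is a downward induction showing $\omega^{(j)}=\rho_{V_j}$ for all $j$. The case $j=n$ is (i). If $\omega^{(j)}=\rho_{V_j}$, then tracing out $v_j$ and using the marginal identity from the lemma gives $\omega^{(j-1)}=\trace_{v_j}\rho_{V_j}=\rho_{V_{j-1}}$, by consistency of marginalization. Consequently
$$\rho_{v_j|V_{j-1}}=\omega^{(j)}_{v_j|V_{j-1}}=\rho_{v_j|\parents(v_j)}\otimes I_{V_{j-1}\setminus\parents(v_j)},$$
where the right-hand side is computed from the $\parents(v_j)\cup\{v_j\}$ marginal of $\rho$. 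Proposition~\ref{prop:one-sided-identities-fixed-state}, applied to $\rho_{V_j}$, then converts this equality into $v_j\ciph (V_{j-1}\setminus\parents(v_j))\cd\parents(v_j)$, which is (O).

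The main obstacle is this identification of the intermediate recursion states with the true marginals. Without it the conditional recovered from the recursion need not coincide with the state's own conditional. The pull-out based marginal preservation in Lemma~\ref{lem:conditional-reconstruction} is exactly what makes the downward induction work.
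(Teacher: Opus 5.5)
Your proposal is correct and follows essentially the same route as the paper. For (ii)$\Rightarrow$(i) you use Proposition~\ref{prop:one-sided-identities-fixed-state} and substitute into \eqref{eq:chain}; for (i)$\Rightarrow$(ii) you apply Lemma~\ref{lem:conditional-reconstruction} step by step to identify the intermediate recursion states with the marginals $\rho_{V_j}$, then read off the ordered Markov restrictions, and your explicit downward induction from $\omega^{(n)}=\rho$ simply spells out the induction the paper leaves implicit.
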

\end{samepage}

\begin{proof}
By Theorem~\ref{thm:dagmarkeq}, statements {\rm(ii)}--{\rm(iv)} are
equivalent.  If {\rm(ii)} holds, then
Proposition~\ref{prop:one-sided-identities-fixed-state} gives
$\rho_{v_j|V_{j-1}}=\rho_{v_j|\parents(v_j)}$ for every $j$; substitution
in the full-past factorization \eqref{eq:chain} proves {\rm(i)}.

Conversely, assume {\rm(i)}.  Each
$\rho_{v_j|\parents(v_j)}$ is a quantum kernel, and
Lemma~\ref{lem:conditional-reconstruction}, applied successively with
$A=\{v_j\}$, $B=V_{j-1}\setminus\parents(v_j)$, and
$C=\parents(v_j)$, shows that each step preserves the preceding marginal and
recovers the kernel as the conditional of $v_j$ given $V_{j-1}$.  Induction
therefore identifies every intermediate state with the corresponding
marginal of $\rho$ and again gives
$\rho_{v_j|V_{j-1}}=\rho_{v_j|\parents(v_j)}$.  By
Proposition~\ref{prop:one-sided-identities-fixed-state}, these identities are
exactly the ordered Markov restrictions, proving {\rm(ii)}.
\end{proof}

Consequently, intrinsic factorization holds for one topological ordering if
and only if it holds for every topological ordering.

The implication {\rm(iii)}$\Rightarrow${\rm(i)} is the $n=1$ specialization
of \citet[Theorem~4.12]{leifer_poulin_2008}, stated there for the local Markov
property.  Their conditional-density characterizations
\citep[Theorems~3.3, 3.5, and~3.6]{leifer_poulin_2008} underlie
Proposition~\ref{prop:one-sided-identities-fixed-state}.  The converse is the
new part of the theorem: the marginal-preservation argument above shows that
the recursive factors are conditionals of the final state, and hence enforce
all three directed Markov properties.

The next section shows that this order-independence can be lost when kernels
are specified extrinsically.

\section{Extrinsic quantum kernel constructions}
\label{sec:extrinsic-kernels}

\subsection{Ordered kernel construction}

We next pass from a state's own conditional density operators to extrinsically
specified quantum kernels.  Classically, systems of Markov kernels are
equivalent to structural equation models and provide a natural intervention
calculus; see, for example, \citet[Section~4.3.3]{lauritzen:26}.  Every such
classical system obeys the Markov property of its DAG.  The quantum
construction below retains the sequential interpretation but, in general,
does not retain the associated conditional independences.

Fix a topological ordering $v_1<\cdots<v_n$ of $\dag$.
For each root $v$, specify a density operator
$Q_v\in\mathbb S_1^+(\hilb_v)$.  For each non-root $v$, specify a quantum
kernel
$Q_{v|\parents(v)}\in\mathbb S^+(\hilb_{v\cup\parents(v)})$.  When
$v=v_j$, the kernel is embedded into $\hilb_{v_j\cup V_{j-1}}$ by tensoring
with $I_{V_{j-1}\setminus\parents(v_j)}$.

\begin{definition}[Ordered kernel construction]
The ordered kernel construction associated with $(\dag,<)$ maps local
quantum kernels $\mathcal Q=\{Q_{v|\parents(v)}:v\in V\}$ to the state
$\omega=\Phi_{\dag,<}(\mathcal Q)$ defined recursively by
$\omega_{V_1}=Q_{v_1}$ and
$$
    \omega_{V_j}=Q_{v_j|\parents(v_j)}\star\omega_{V_{j-1}},
    \qquad j=2,\ldots,n.
$$
For a root vertex $v$, the notation $Q_{v|\parents(v)}$ means $Q_v$.
\end{definition}

Lemma~\ref{lem:conditional-reconstruction} shows that every intermediate
operator in the recursion is a density operator.

\begin{cor}[Full-past recovery]
\label{cor:ordered-kernel-recovery}
Let $\omega=\Phi_{\dag,<}(\mathcal Q)$ be obtained from the ordered kernel
construction. If $v_j$ is the $j$th vertex in the ordering, then, for
$j=2,\ldots,n$,
$$
    \omega_{v_j|V_{j-1}}
    =
    Q_{v_j|\parents(v_j)},
$$
where the kernel is embedded by tensoring with
$I_{V_{j-1}\setminus\parents(v_j)}$.
\end{cor}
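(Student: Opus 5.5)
The plan is a short induction on $j$ that turns Lemma~\ref{lem:conditional-reconstruction} into two facts at once: the recursion never changes earlier marginals, and each kernel reappears as a full-past conditional. The first step is to show that for every $j\le n$ the marginal of the final state on $V_j$ equals the intermediate state:
$$
    \trace_{V\setminus V_j}(\omega)=\omega_{V_j},
$$
so that the notation $\omega_{V_j}$ is unambiguous. This comes from applying Lemma~\ref{lem:conditional-reconstruction} to each step $k>j$, with
$$
    A=\{v_k\},\qquad
    C=\parents(v_k)\subseteq V_{k-1},\qquad
    B=V_{k-1}\setminus\parents(v_k).
$$
The lemma gives $\trace_{v_k}(\omega_{V_k})=\omega_{V_{k-1}}$. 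Iterating down from $k=n$ to $k=j+1$, and using consistency of marginalization, yields the displayed identity. As part of the same induction, each $\omega_{V_{k-1}}$ is a positive definite density operator: this holds at the base, where $\omega_{V_1}=Q_{v_1}\in\mathbb S_1^+(\hilb_{v_1})$, and it is preserved by the lemma at each step. These are exactly the hypotheses on $\sigma_{B\cup C}$ that the lemma requires.

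The second step recovers the conditional. For $j\ge2$, apply Lemma~\ref{lem:conditional-reconstruction} once more with $A=\{v_j\}$, $C=\parents(v_j)$, $B=V_{j-1}\setminus\parents(v_j)$ and $\sigma=\omega_{V_{j-1}}$. It gives
$$
    \bigl(\omega_{V_j}\bigr)_{v_j|V_{j-1}}
    =Q_{v_j|\parents(v_j)}\otimes I_{V_{j-1}\setminus\parents(v_j)}.
$$
By the first step, both $\omega_{V_j}$ and $\omega_{V_{j-1}}$ are the marginals of $\omega$ on $V_j$ and $V_{j-1}$. Hence the direct conditional
$$
    \omega_{v_j|V_{j-1}}
    =\omega_{V_{j-1}}^{-1/2}\,\omega_{V_j}\,\omega_{V_{j-1}}^{-1/2},
$$
computed from the final state $\omega$, is the same operator, which proves the claim.

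One edge case needs a remark: when $\parents(v_j)=\varnothing$ for some $j\ge2$, the kernel is a density operator $Q_{v_j}$. Its trace normalization $\trace_{v_j}(Q_{v_j})=1=I_\varnothing$ still makes it a quantum kernel from $\varnothing$, so the lemma applies unchanged with $C=\varnothing$.

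There is no real obstacle here. The only point needing care is the marginal-consistency step, because the corollary is a statement about the conditional of the final state $\omega$, not of the intermediate state $\omega_{V_j}$. Without the first step the argument would only show the recovery at stage $j$ of the recursion.
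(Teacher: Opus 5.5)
Your proposal is correct and matches the paper's proof. Like the paper, you apply Lemma~\ref{lem:conditional-reconstruction} at each step with $A=\{v_j\}$, $B=V_{j-1}\setminus\parents(v_j)$ and $C=\parents(v_j)$, use marginal preservation inductively to identify each intermediate state with the corresponding marginal of the final $\omega$, and then read off the full-past conditional. Your remarks on positive definiteness and on roots occurring at positions $j\ge2$ only spell out details the paper leaves implicit.
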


\begin{proof}
At step $j$, apply Lemma~\ref{lem:conditional-reconstruction} with
$A=\{v_j\}$, $B=V_{j-1}\setminus\parents(v_j)$, and
$C=\parents(v_j)$.  It gives
$$
    (\omega_{V_j})_{V_{j-1}}=\omega_{V_{j-1}}
    \qquad\text{and}\qquad
    (\omega_{V_j})_{v_j|V_{j-1}}
    =
    Q_{v_j|\parents(v_j)}.
$$
The first identity shows inductively that $\omega_{V_j}$ is the $V_j$
marginal of the final state.  The second identity then gives the claim.
\end{proof}

The corollary says that the specified kernel is recovered as the conditional
density operator of $v_j$ given the full ordered past. It does not say that the
same kernel is recovered from the marginal on $v_j\cup\parents(v_j)$. In
general we have
$$
    Q_{v_j|\parents(v_j)}\neq \omega_{v_j|\parents(v_j)}.
$$
Thus the state obtained from the ordered kernel construction need not be
directed quantum Markov. The $\star$ pull-out criterion in
Section~\ref{sec:indep-crit} isolates the missing compatibility condition.

\begin{cor}[Commuting kernels]
\label{cor:commuting-kernels}
Let
$\mathcal Q=\{Q_{v|\parents(v)}:v\in V\}$ be a family of quantum kernels. Suppose that their canonical embeddings
into $\cL(\hilb_V)$ commute pairwise. Then, for every topological ordering
$<$ of $\dag$,
$$
    \Phi_{\dag,<}(\mathcal Q)
    =
    \prod_{v\in V}Q_{v|\parents(v)}
    =:\omega
$$
is the same state. Moreover,
$$
    Q_{v|\parents(v)}
    =
    \omega_{v|\parents(v)}
    \qquad (v\in V),
$$
and the extrinsic construction agrees with the intrinsic factorization of
$\omega$. Consequently, $\omega$ satisfies all three directed Markov
properties {\rm(O)}, {\rm(L)} and {\rm(G)}.
\end{cor}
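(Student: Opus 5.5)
We argue by induction along an arbitrary topological ordering $v_1<\cdots<v_n$. The claim to carry through the induction is that $\omega_{V_j}=\prod_{i\le j}Q_{v_i|\parents(v_i)}$, all factors embedded into $\hilb_{V_j}$. We also need that the partial products commute with the remaining kernels. For $j=1$ this is the definition.

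For the inductive step, we note that pairwise commuting positive definite operators commute with each other's square roots and with their products. Hence $P_{j-1}:=\prod_{i<j}Q_{v_i|\parents(v_i)}$ is positive definite, commutes with $Q_{v_j|\parents(v_j)}$, and so does $P_{j-1}^{1/2}$. Therefore
$$Q_{v_j|\parents(v_j)}\star P_{j-1}=P_{j-1}^{1/2}Q_{v_j|\parents(v_j)}P_{j-1}^{1/2}=Q_{v_j|\parents(v_j)}P_{j-1}.$$
A small point to check is that commutation of the canonical embeddings in $\cL(\hilb_V)$ passes down to $\cL(\hilb_{V_j})$. It does: the embeddings in $\hilb_V$ are the $\hilb_{V_j}$-embeddings tensored with $I_{V\setminus V_j}$. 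Since commuting factors can be multiplied in any order, $\omega=\prod_v Q_{v|\parents(v)}$ is independent of the ordering. Lemma~\ref{lem:conditional-reconstruction} guarantees that it is a state.

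Next we show $\omega_{v|\parents(v)}=Q_{v|\parents(v)}$. Fix $v=v_j$ and put $A=\{v_j\}$, $C=\parents(v_j)$, $B=V_{j-1}\setminus C$, $\sigma_{B\cup C}=\omega_{V_{j-1}}=P_{j-1}$. The previous step shows that $Q_{A|C}$ commutes with $\sigma_{B\cup C}$. Corollary~\ref{cor:commuting-pull-out} then yields condition (i) of Proposition~\ref{prop:star-pull-out-criterion}, that is, $(\omega_{V_j})_{A|C}=Q_{A|C}$. Since $\omega_{V_j}$ is the $V_j$-marginal of $\omega$ by Corollary~\ref{cor:ordered-kernel-recovery}, this is $\omega_{v_j|\parents(v_j)}=Q_{v_j|\parents(v_j)}$. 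For roots the identity is just $\omega_v=Q_v$, which follows from marginal preservation once we take an ordering in which $v$ comes first; by order-independence this ordering gives the same $\omega$.

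Finally, substituting $Q_{v|\parents(v)}=\omega_{v|\parents(v)}$ into the recursion defining $\Phi_{\dag,<}(\mathcal Q)=\omega$ shows that $\omega$ factorizes intrinsically over $(\dag,<)$. Theorem~\ref{thm:directed-factorization-equivalence} then gives (O), (L) and (G). Equivalently, condition (iii) of Proposition~\ref{prop:star-pull-out-criterion} at each step gives (O) directly. The only delicate point is the commutation bookkeeping: square roots of commuting positive operators commute, and this survives embedding and partial trace. That bookkeeping is handled above and inside Corollary~\ref{cor:commuting-pull-out}.
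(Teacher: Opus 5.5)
Your proposal is correct and follows essentially the same route as the paper: commutation turns each $\star$-step into an ordinary product, and iterating Corollary~\ref{cor:commuting-pull-out} together with the marginal preservation of Lemma~\ref{lem:conditional-reconstruction} recovers each kernel as the parent conditional. The directed Markov properties then follow from Theorems~\ref{thm:dagmarkeq} and~\ref{thm:directed-factorization-equivalence}, and your version mainly spells out the commutation bookkeeping (square roots, partial products, and restriction from $\hilb_V$ to $\hilb_{V_j}$) that the paper leaves implicit.
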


\begin{proof}
Fix a topological ordering $v_1<\cdots<v_n$.  At each step the next kernel
commutes with the current state, so the $\star$-product is the ordinary
product.  This proves both the displayed product formula and its independence
of the ordering.

Now iterate Corollary~\ref{cor:commuting-pull-out}.  At step $j$ it recovers
$Q_{v_j|\parents(v_j)}$ from the parent marginal and gives the corresponding
ordered Markov restriction; later steps preserve that marginal by
Lemma~\ref{lem:conditional-reconstruction}.  Thus the final state has the
ordered Markov property and its extrinsic kernels are precisely its intrinsic
parent conditionals.  The conclusion follows from
Theorems~\ref{thm:dagmarkeq}
and~\ref{thm:directed-factorization-equivalence}.
\end{proof}

\medskip\noindent\emph{Relation to earlier work.}
\citet[Definition~10]{allen_barrett_horsman_lee_spekkens_2017} define a
quantum causal model on a DAG by assigning a local channel operator to each
vertex, requiring these operators to commute pairwise, and taking their
product as the overall operator. For a channel with input $A$ and outputs
$B,C$, their Definition~6 defines quantum conditional independence of the
outputs given the input by the factorization
$\rho_{BC|A}=\rho_{B|A}\rho_{C|A}$. Their Theorem~2 relates this
factorization to compatibility with $A$ as a complete common cause, and their
Theorem~3 shows that it is equivalent to $I(B:C\cd A)=0$ for the channel
operator after normalization to trace one. Their Theorem~4 gives the
corresponding result for several outputs. For general DAGs, they leave open
the problem of finding a property of the overall operator for which
$d$-separation is sound and complete
\citep[Section~VII]{allen_barrett_horsman_lee_spekkens_2017}.

At the algebraic level, Definition~10 is the commuting case of our extrinsic
construction: the local operators have the same partial-trace normalization,
commute pairwise, and define the overall operator by their product. Their
input--output representation is designed to describe interventions, whereas
we regard the product as a joint density operator on the vertex Hilbert
spaces. In the latter setting, Corollary~\ref{cor:commuting-kernels} shows in
addition that the prescribed kernels are recovered as the state's intrinsic
parent conditionals and hence imply the ordered, local and global directed
Markov properties. We do not assert the converse that every intrinsically
Markov state arises from commuting kernels.

\begin{ex}[Classical limit]
Fix a product basis and suppose that every embedded kernel is diagonal in that
basis. Its diagonal entries then have the form
$$
    q_v(x_v\cd x_{\parents(v)}),
    \qquad
    \sum_{x_v}q_v(x_v\cd x_{\parents(v)})=1.
$$
The embedded kernels commute, so Corollary~\ref{cor:commuting-kernels} gives a
diagonal state whose diagonal entries are
$$
    p(x_V)
    =
    \prod_{v\in V}q_v(x_v\cd x_{\parents(v)}).
$$
Thus the commuting construction contains the usual classical Bayesian network
factorization as a special case.
\end{ex}

\subsection{A three-qubit obstruction}
\label{subsec:three-qubit-example}

Corollary~\ref{cor:ordered-kernel-recovery} recovers each kernel after
conditioning on the full ordered past.  Directed Markovness additionally
requires recovery from the parent marginal.  The following example shows
explicitly how tracing out a non-parent predecessor can change that
conditional.

Consider the chain $1\to3\to2$, with qubits at its vertices, and let $X,Z$
denote Pauli matrices
$$X=\begin{pmatrix}0&1\\1&0\end{pmatrix}, \qquad Z=\begin{pmatrix}1&0\\0&-1\end{pmatrix}.$$  For $0<r<1$ and $0<|\theta|<1$, set
$\sigma_{13}=\frac14(I+rX\otimes X)$ and
$Q_{2|3}=\frac12(I+\theta Z\otimes Z)$. Both operators are positive definite
and $\trace_2(Q_{2|3})=I_3$, so
$\omega=(Q_{2|3}\otimes I_1)\star\sigma_{13}$ is a density operator.
Lemma~\ref{lem:conditional-reconstruction} gives the full-past identity
$\omega_{2|1\cup3}=Q_{2|3}\otimes I_1$.

It remains to compute the parent conditional.  The Pauli calculation from
\citet[Lemma~4.2]{lauritzen:zwiernik:26} gives
$\sigma_{13}^{1/2}=\frac12(\alpha I+\beta X\otimes X)$, where
$\alpha^2+\beta^2=1$ and
$\alpha^2-\beta^2=\sqrt{1-r^2}$.  After tracing out system $1$, the mixed
terms vanish because $\trace(X)=0$, while conjugation by $X$ on system $3$
changes the sign of $Z$.  Hence
\begin{equation}\label{eq:three-qubit-parent-conditional}
\begin{aligned}
    \omega_{23}
    &=
    \tfrac12\bigl(\alpha^2Q_{2|3}
        +\beta^2X_3Q_{2|3}X_3\bigr)
    =
    \tfrac14\bigl(I+\theta\sqrt{1-r^2}\,Z\otimes Z\bigr),\\
    \omega_{2|3}
    &=
    2\omega_{23}
    =
    \tfrac12\bigl(I+\theta\sqrt{1-r^2}\,Z\otimes Z\bigr),
\end{aligned}
\end{equation}
where the second line uses $\omega_3=I_3/2$.

Since $\sqrt{1-r^2}<1$, equation
\eqref{eq:three-qubit-parent-conditional} shows that
$\omega_{2|3}\neq Q_{2|3}$ whenever $\theta\neq0$.  Thus the kernel is
recovered from the full past but not from its parent marginal.
\begin{samepage}
The three relevant operators are summarized below.
\begin{center}
\renewcommand{\arraystretch}{1.25}
\begin{tabular}{
    >{\raggedright\arraybackslash}p{0.18\textwidth}
    >{\raggedright\arraybackslash}p{0.35\textwidth}
    >{\raggedright\arraybackslash}p{0.27\textwidth}}
\hline
Quantity & Value & Role \\
\hline
Specified kernel
&
$Q_{2|3}=\frac12(I+\theta Z\otimes Z)$
&
Extrinsic input
\\
Full past
&
$\omega_{2|1\cup3}=Q_{2|3}\otimes I_1$
&
Recovered exactly
\\
Parents only
&
$\omega_{2|3}
=\frac12(I+\theta\sqrt{1-r^2}\,Z\otimes Z)$
&
Reduced correlation
\\
\hline
\end{tabular}
\end{center}
\end{samepage}
By Proposition~\ref{prop:one-sided-identities-fixed-state}, this failure is
equivalent to the failure of $1\ciph2\cd3$.  For instance, taking
$r=\sqrt3/2$ and $\theta=1/2$ reduces the coefficient of
$Z\otimes Z$ from $1/2$ in the prescribed kernel to $1/4$ in the parent
conditional.  This is the chain mechanism used again in
the next subsection and embedded in
Proposition~\ref{prop:nontransitive-obstruction}.

\subsection{Order-invariant kernel constructions}

In a classical Bayesian network, the product
$   \prod_{v\in V} p(x_v\cd x_{\parents(v)})$
does not depend on the order in which the factors are multiplied. Equivalently,
one may construct the joint distribution recursively along any topological
ordering of the DAG and obtain the same result. Thus the topological ordering
is not part of the model.

The situation is different for the ordered quantum kernel construction. The
$\star$-product is noncommutative and nonassociative, and so a choice of
topological ordering is generally part of the construction. In this sense, a
fixed-order construction should be viewed as a quantum analogue of a sequential
procedure, or computational schedule, rather than as a model determined by the
DAG alone.

We therefore ask when this extra ordering information disappears. A family
of quantum kernels
$\mathcal Q=\{Q_{v|\parents(v)}:v\in V\}$ is called order-invariant if
$\Phi_{\dag,<}(\mathcal Q)$ is the same state for every topological ordering
$<$ of $\dag$. In that case we write
$$
    \omega=\Phi_\dag(\mathcal Q)
$$
for the common state.

Order-invariance is the extrinsic-kernel analogue of the order-free classical
Bayesian network factorization. It says that the local
kernels define a state associated with the partial order encoded by the DAG,
rather than with a particular linear extension of that partial order. It also
has consequences for conditional independence because it
allows us to compare the conditionals recovered from different ordered
constructions.

\paragraph{Three motivating graphs}

Before giving the general result, consider the three DAGs in
Figure~\ref{fig:three_graphs}.  They isolate the two ways in which
order-invariance can succeed or fail to enforce the Markov property.
\begin{figure}[htb]
    \centering
    \begin{tikzpicture}
        [node distance = 6mm and 6mm, minimum width = 4mm]
    %% nodes
    \begin{scope}
      \tikzstyle{every node} = [shape = circle,
      font = \scriptsize,
      minimum height = 4mm,
      inner sep = 0pt,
      draw = black,
      fill = white,
      anchor = center,
      text centered]

      \node(a) at (0,0) {$2$};
      \node(b) [right =  of a] {$1$};
      \node(c) [right = of b] {$3$};

      \node(a1) [ right = 10mm of c]{$1$};
      \node (b1) [right = of a1]{$2$};
      \node (c1) [right = of b1]{$3$};
      \node (a2) [right = 10mm of c1]{$1$};
      \node (b2) [right = of a2]{$3$};
      \node (c2)[right = of b2]{$2$};
    \end{scope}

    \begin{scope}[->, > = latex']
      \draw (b) -- (a);
      \draw (b) -- (c);
      \draw (a1) -- (b1);
      \draw (b1)-- (c1);
      \draw (a2) -- (b2);
      \draw (c2) -- (b2);
    \end{scope}

    \begin{scope}
    \tikzstyle{every node} = [node distance = 4mm and 4mm, minimum width = 4mm,
    font = \scriptsize,
      anchor = center,
      text centered]
      \node(afig) [below = 2mm of b]{(a)};
      \node(bfig) [ below = 2mm of b1]{(b)};
      \node (cfig)[below = 2mm of b2]{(c)};
    \end{scope}
\end{tikzpicture}
    \caption{The fork, chain and collider. The vertex labels represent a
    topological ordering in each graph.}
    \label{fig:three_graphs}
\end{figure}

\begin{samepage}
The mechanism is summarized below; the three short arguments that follow
justify each row.
\begin{center}
\renewcommand{\arraystretch}{1.25}
\begin{tabular}{
    >{\raggedright\arraybackslash}p{0.13\textwidth}
    >{\raggedright\arraybackslash}p{0.27\textwidth}
    >{\raggedright\arraybackslash}p{0.38\textwidth}}
\hline
Graph & Available reorderings & Consequence of order-invariance \\
\hline
Fork
&
The two children can be exchanged
&
Their conditionals can be compared, forcing $3\ciph2\cd1$
\\
Chain
&
The ordering is unique
&
Order-invariance is vacuous and need not imply Markovness
\\
Collider
&
The two roots can be exchanged
&
The root factors remain a product, forcing $1\ciph2$
\\
\hline
\end{tabular}
\end{center}
\end{samepage}

\noindent\emph{The fork.}

Consider the fork in Figure~\ref{fig:three_graphs}(a). The two topological
orderings are $(1,2,3)$ and $(1,3,2)$. Suppose that the same state is obtained
from the same kernels in both orders:
$$
    \omega
    =
    Q_{3|1}\star(Q_{2|1}\star Q_1)
    =
    Q_{2|1}\star(Q_{3|1}\star Q_1).
$$
The first ordering gives $\omega_{2|1}=Q_{2|1}$ and
$\omega_{3|12}=Q_{3|1}$. The second ordering gives
$\omega_{3|1}=Q_{3|1}$ and $\omega_{2|13}=Q_{2|1}$. Hence
$\omega_{3|12}=\omega_{3|1}$, and
Proposition~\ref{prop:one-sided-identities-fixed-state} yields
$3\ciph2\cd1$. Thus order-invariance for the fork implies the directed
quantum Markov property.

\medskip\noindent\emph{The chain.}

For the chain in Figure~\ref{fig:three_graphs}(b) there is only one topological
ordering, so every kernel family is trivially order-invariant.  The
construction gives
$$
    \omega=Q_{3|2}\star(Q_{2|1}\star Q_1)
$$
and hence $\omega_{3|12}=Q_{3|2}$. It does not follow that
$Q_{3|2}=\omega_{3|2}$, or equivalently that
$\omega_{3|12}=\omega_{3|2}\otimes I_1$. Thus the kernel construction
need not imply $3\ciph1\cd2$. The explicit Pauli example in
Subsection~\ref{subsec:three-qubit-example} gives such a state. This is the
basic quantum obstruction behind the non-transitive case.

\medskip\noindent\emph{The collider.}

For the collider in Figure~\ref{fig:three_graphs}(c), any topological ordering
places the two roots before the child. A construction has the form
$$
    \omega=Q_{3|12}\star(Q_2\star Q_1)
$$
up to interchanging the two root factors. Since $Q_1$ and $Q_2$ act on
different tensor factors of the Hilbert space, they commute and we have $Q_1\star Q_2=Q_2\star Q_1=Q_1\otimes Q_2$.
Marginalizing over the child gives $\omega_{12}=Q_1\otimes Q_2$, hence
$1\ciph2$. This is exactly the directed Markov property for the collider.

The fork and collider allow incomparable vertices to exchange positions,
whereas the chain has a unique topological ordering. The following result
formalizes this order-comparison mechanism. Let $<$ and $<'$ be two
topological orderings of $\dag$. For a vertex $v$,
write $\pre(v)=\{u:u<v\}$ and $\pre'(v)=\{u:u<'v\}$.

\begin{prop}[Order comparison]\label{prop:order-comparison-qci}
Let $\mathcal Q$ be order-invariant, and let
$\omega=\Phi_\dag(\mathcal Q)$. If, for some vertex $v$ and two topological
orderings $<$ and $<'$, we have $\pre'(v)\subseteq\pre(v)$, then
$$
    v\ciph \bigl(\pre(v)\setminus\pre'(v)\bigr)\cd \pre'(v)
$$
with respect to $\omega$.
\end{prop}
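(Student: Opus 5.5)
We will mimic the fork argument by comparing the full-past conditionals that the two orderings produce for the same vertex $v$. Apply Corollary~\ref{cor:ordered-kernel-recovery} to $\omega=\Phi_{\dag,<}(\mathcal Q)$ at the step where $v$ is added. This gives
$$
    \omega_{v|\pre(v)}=Q_{v|\parents(v)}\otimes I_{\pre(v)\setminus\parents(v)}.
$$
By order-invariance the same $\omega$ equals $\Phi_{\dag,<'}(\mathcal Q)$, so the corollary applied to $<'$ gives
$$
    \omega_{v|\pre'(v)}=Q_{v|\parents(v)}\otimes I_{\pre'(v)\setminus\parents(v)}.
$$
If $v$ is a root, then $Q_{v|\parents(v)}=Q_v$ and Lemma~\ref{lem:conditional-reconstruction} still gives the analogous identity, with the conditional on the empty set read as the marginal. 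That case needs only notational care.

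Both conditionals are computed from marginals of the single state $\omega$. One point needs checking: Corollary~\ref{cor:ordered-kernel-recovery} identifies the intermediate states $\omega_{V_j}$ with marginals of the final state. Hence both displayed objects really are direct conditional density operators of $\omega$, on the sets $\{v\}\cup\pre(v)$ and $\{v\}\cup\pre'(v)$ respectively.

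Since $\parents(v)\subseteq\pre'(v)\subseteq\pre(v)$, the second identity, embedded into $\hilb_{v\cup\pre(v)}$ by tensoring with $I_{\pre(v)\setminus\pre'(v)}$, has the same right-hand side as the first. Therefore
$$
    \omega_{v|\pre(v)}=\omega_{v|\pre'(v)}\otimes I_{\pre(v)\setminus\pre'(v)}.
$$
Now apply Proposition~\ref{prop:one-sided-identities-fixed-state}, (ii)$\Rightarrow$(i), with $A=\{v\}$, $B=\pre(v)\setminus\pre'(v)$, $C=\pre'(v)$, to the marginal $\omega_{v\cup\pre(v)}$. This marginal is positive definite, being a partial trace of a positive definite operator. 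The result is $v\ciph(\pre(v)\setminus\pre'(v))\cd\pre'(v)$. Conditional mutual information depends only on this marginal, so the independence holds with respect to $\omega$.

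There is no real obstacle; the proof is a direct generalization of the fork computation. The step requiring the most care is the bookkeeping of embeddings. Specifically, we must confirm that the kernel embedded with identities on $\pre'(v)\setminus\parents(v)$ and then on $\pre(v)\setminus\pre'(v)$ coincides with the kernel embedded with identities on $\pre(v)\setminus\parents(v)$. This is immediate because tensoring with identities is associative.
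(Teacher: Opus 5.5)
Your proposal is correct and follows the paper's proof essentially step for step. You apply Corollary~\ref{cor:ordered-kernel-recovery} to both orderings, use order-invariance to identify the two states, embed the $<'$ identity into $\cL(\hilb_{v\cup\pre(v)})$ using $\parents(v)\subseteq\pre'(v)$, and conclude with Proposition~\ref{prop:one-sided-identities-fixed-state}; your extra care about root vertices (when $v$ is first in an ordering, the identity $\omega_v=Q_v$ comes from the initialization $\omega_{V_1}=Q_{v_1}$ plus marginal preservation) and about applying the proposition to the positive definite marginal $\omega_{v\cup\pre(v)}$ only makes explicit what the paper leaves implicit.
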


\begin{proof}
Apply Corollary~\ref{cor:ordered-kernel-recovery} to the two
orderings. Since $\mathcal Q$ is order-invariant, both orderings give the same
state $\omega$. The ordering $<$ gives
$\omega_{v|\pre(v)}=Q_{v|\parents(v)}\otimes
I_{\pre(v)\setminus\parents(v)}$, whereas $<'$ gives
$\omega_{v|\pre'(v)}=Q_{v|\parents(v)}\otimes
I_{\pre'(v)\setminus\parents(v)}$. Since $<'$ is topological,
$\parents(v)\subseteq\pre'(v)$. Embedding both identities into
$\cL(\hilb_{v\cup\pre(v)})$, we obtain
$$
    \omega_{v|\pre(v)}
    =
    \omega_{v|\pre'(v)}\otimes I_{\pre(v)\setminus\pre'(v)}.
$$
By Proposition~\ref{prop:one-sided-identities-fixed-state}, this implies the
asserted quantum conditional independence.
\end{proof}

The previous proposition shows that order-invariance produces conditional
independences whenever a vertex can be moved earlier in the topological
ordering.
It yields the most restrictions when non-parent predecessors of a vertex can
be placed after that vertex in another topological ordering. The next theorem
identifies when this mechanism recovers all directed Markov restrictions.

Call a DAG $\dag$
\emph{transitive} if whenever there is a directed path $i\to\cdots\to k$, there
is also an edge $i\to k$. Equivalently, every ancestor of a vertex is one of
its parents.

\begin{thm}[Order-invariance on transitive DAGs]
\label{thm:transitive-order-invariant-implies-markov}
Let $\dag$ be transitive, and let $\mathcal Q$ be an order-invariant family of
quantum kernels with common state $\omega=\Phi_\dag(\mathcal Q)$. Then $\omega$
satisfies all three directed Markov properties {\rm(O)}, {\rm(L)} and {\rm(G)}.
\end{thm}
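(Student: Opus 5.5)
By Theorem~\ref{thm:dagmarkeq}, it suffices to verify the ordered Markov property {\rm(O)} for a single topological ordering, or the local property {\rm(L)}. We would aim for (L), since transitivity makes the non-descendants of a vertex easy to describe. In a transitive DAG the ancestors of $v$ are exactly $\parents(v)$. Hence every vertex of $N(v):=\nondesc(v)\setminus\parents(v)$ is neither an ancestor nor a descendant of $v$, so it is incomparable with $v$ in the partial order generated by $\dag$.

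The key step is to build, for each fixed $v$, two topological orderings $<$ and $<'$ to which Proposition~\ref{prop:order-comparison-qci} applies. Let $\operatorname{an}(v)=\parents(v)$ denote the ancestors of $v$. For $<'$, list $\parents(v)$ in some topological order, then $v$, then the rest. This is topological: $\parents(v)$ is ancestrally closed by transitivity, so $\parents(v)\cup\{v\}$ is ancestrally closed, and every ancestrally closed initial segment extends to a linear extension. Thus $\pre'(v)=\parents(v)$.

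For $<$, list the ancestrally closed set $\nondesc(v)$ first, then $v$, then $\desc(v)$. The set $\nondesc(v)$ is ancestrally closed because an ancestor of a non-descendant cannot be a descendant of $v$. The set $\nondesc(v)\cup\{v\}$ is also ancestrally closed, since $\parents(v)\subseteq\nondesc(v)$ and all ancestors of $v$ are parents. Hence there is a topological ordering with $\pre(v)=\nondesc(v)$. Since $\pre'(v)=\parents(v)\subseteq\nondesc(v)=\pre(v)$, Proposition~\ref{prop:order-comparison-qci} yields
$$
v\ciph\bigl(\nondesc(v)\setminus\parents(v)\bigr)\cd\parents(v)
$$
with respect to $\omega$. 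Applying this to every $v$ gives (L), and Theorem~\ref{thm:dagmarkeq} then gives (O) and (G). Alternatively, for (O) with any fixed ordering one can compare directly against $<'$, since $\parents(v)\subseteq\pre(v)$ always holds.

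The main point to check carefully is the existence of the ordering $<'$ with $\pre'(v)=\parents(v)$. This is exactly where transitivity is used. In a non-transitive DAG some ancestor of $v$ that is not a parent would have to precede $v$ in every ordering, and the comparison would miss that restriction. The remaining ingredients are routine: the linear-extension fact for ancestrally closed sets, and the observation that the kernel embedding in Corollary~\ref{cor:ordered-kernel-recovery} is consistent across the two orderings. The latter is already handled inside Proposition~\ref{prop:order-comparison-qci}.
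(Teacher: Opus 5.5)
Your proof is correct and follows essentially the paper's argument: the key step is the same auxiliary ordering $<'$ with $\pre'(v)=\parents(v)$, which exists precisely because transitivity makes $\parents(v)$ ancestrally closed, combined with Proposition~\ref{prop:order-comparison-qci}. The only difference is that you compare $<'$ with an ordering having $\pre(v)=\nondesc(v)$ to obtain (L) directly, whereas the paper compares it with an arbitrary fixed ordering to obtain (O) (the alternative you mention at the end); both routes then conclude via Theorem~\ref{thm:dagmarkeq}.
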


\begin{proof}
Fix a topological ordering $<$ and a vertex $v$. We prove the ordered Markov
restriction
$$
    v\ciph \bigl(\pre(v)\setminus\parents(v)\bigr)\cd\parents(v).
$$
Since $\dag$ is transitive, the ancestors of $v$ are precisely its parents,
and these are exactly the vertices forced to precede $v$.  Topologically order
this ancestral set, place $v$ next, and then extend to a topological ordering
$<'$ of all vertices.  Thus $\pre'(v)=\parents(v)$. Applying
Proposition~\ref{prop:order-comparison-qci} to $<$ and $<'$ gives the desired
conditional independence. Since $<$ and $v$ were arbitrary, the ordered Markov
property holds for every topological ordering. The result follows from
Theorem~\ref{thm:dagmarkeq}.
\end{proof}

The converse holds in the sense that, for non-transitive DAGs,
order-invariance of the kernel construction does not force the directed
Markov property.

\begin{prop}[The non-transitive obstruction]
\label{prop:nontransitive-obstruction}
If $\dag$ is not transitive, then there is an order-invariant family of
positive definite quantum kernels whose common state is not directed quantum
Markov with respect to $\dag$.
\end{prop}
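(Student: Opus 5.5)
The plan is to embed the three-qubit chain obstruction of Subsection~\ref{subsec:three-qubit-example} into an arbitrary non-transitive DAG. All other kernels will be multiples of the identity, so that they commute with everything and order-invariance becomes automatic.

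\emph{Choice of vertices and kernels.} Since $\dag$ is not transitive, there is a directed path from some $i$ to some $k$ with no edge $i\to k$. Take such a pair with the path as short as possible. If the shortest path has length $\geq 3$, say $i\to u\to\cdots\to k$, then either $i\to$ (second vertex after $u$) is absent, giving a shorter counterexample, or it is present and we shorten the path. Either way minimality yields a path $i\to j\to k$ with $i\not\to k$. Take all Hilbert spaces to be qubits (any dimensions would do). Define the kernels as follows, each embedded with identities on the remaining parents:
\begin{itemize}
\item $Q_{j|\parents(j)}=\tfrac12(I+rX_i\otimes X_j)\otimes I_{\parents(j)\setminus\{i\}}$;
\item $Q_{k|\parents(k)}=\tfrac12(I+\theta Z_j\otimes Z_k)\otimes I_{\parents(k)\setminus\{j\}}$;
\item $Q_{v|\parents(v)}=\tfrac12 I_{v\cup\parents(v)}$ for every other vertex $v$, including $i$.
\end{itemize}
Here $0<r<1$ and $0<|\theta|<1$. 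Each of these is positive definite with $\trace_v Q_{v|\parents(v)}=I_{\parents(v)}$, so the family consists of quantum kernels.

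\emph{Order-invariance.} For a scalar $c>0$ we have $cI\star N=cN$ and $M\star(cN)=c\,(M\star N)$. Hence in any topological ordering the scalar kernels simply contribute the factor $\tfrac12$ (times the identity on their vertex), wherever they occur. Every topological ordering places $i<j<k$. Moreover, the only nontrivial kernels act on $\{i,j\}$ and $\{j,k\}$, and when each is applied the current state restricted to those systems is a tensor product with maximally mixed factors. One checks inductively that
$$
\omega=\bigl(Q_{k|j}\star(Q_{j|i}\star\tfrac12 I_i)\bigr)\otimes 2^{-(n-3)}I_{V\setminus\{i,j,k\}}.
$$
This expression is independent of the ordering. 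The point to verify carefully is that $\star$ with a state of the form $\tau\otimes cI$ factorizes as $(M\star\tau)\otimes cI$ when $M$ acts trivially on the maximally mixed part.

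\emph{Failure of Markovness.} On $\{i,j,k\}$ we have $Q_{j|i}\star\tfrac12 I_i=\tfrac14(I+rX\otimes X)=\sigma_{ij}$. Thus $\omega_{ijk}$ is exactly the state of Subsection~\ref{subsec:three-qubit-example} with the labels $1,3,2$ replaced by $i,j,k$. That computation, together with Proposition~\ref{prop:one-sided-identities-fixed-state}, gives $I(i:k\cd j)_\omega>0$.

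Now suppose $\omega$ were directed Markov. Since $i\to j\to k$, the vertex $i$ is a non-descendant of $k$, and $i\notin\parents(k)$. So (L) would give $k\ciph i\cd\parents(k)$. Because $\omega$ is a tensor product of $\omega_{ijk}$ with a maximally mixed state on the remaining vertices, every entropy splits additively across the two factors. Therefore
$$
I\bigl(k:i\cd\parents(k)\bigr)_\omega=I(k:i\cd j)_\omega>0,
$$
a contradiction. Hence $\omega$ is not directed quantum Markov.

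The main obstacle I expect is the order-invariance bookkeeping: showing that the interleaved scalar kernels and the identity-padded embeddings never interact with the nontrivial $\star$-steps. Everything else is inherited directly from the three-qubit example.
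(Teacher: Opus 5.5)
Your proposal is correct and follows essentially the same route as the paper. Both relabel the three-qubit chain example onto a shortcut-free path $i\to j\to k$ and pad every other vertex with product kernels, so that each topological ordering yields $\omega_{ijk}\otimes(\text{fixed product state})$, and both reduce $k\ciph i\cd\parents(k)$ to $k\ciph i\cd j$ by additivity of entropy over tensor factors. The differences are cosmetic: you use maximally mixed fillers rather than arbitrary $\tau_v$, you invoke (L) instead of (O), and your minimal-path argument justifies the existence of the length-two path, which the paper only asserts.
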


\begin{proof}
Since $\dag$ is not transitive, there are vertices $i,j,k$ with
$i\to j\to k$ but no edge $i\to k$. Relabel the construction of
Subsection~\ref{subsec:three-qubit-example} onto this chain.  Denote its
initial state by $\sigma_{ij}$, put
$R_i=(\sigma_{ij})_i$ and
$R_{j|i}=(\sigma_{ij})_{j|i}$, and denote its final Pauli kernel by
$R_{k|j}$.  The resulting three-system state fails $k\ciph i\cd j$.

Embed these data into the parent sets of $\dag$ by taking
$Q_{i|\parents(i)}=R_i\otimes I_{\parents(i)}$,
$Q_{j|\parents(j)}
=R_{j|i}\otimes I_{\parents(j)\setminus\{i\}}$, and
$Q_{k|\parents(k)}
=R_{k|j}\otimes I_{\parents(k)\setminus\{j\}}$.
At every other vertex $v$, choose a positive definite state $\tau_v$ and use
the product kernel $\tau_v\otimes I_{\parents(v)}$.  Every topological
ordering places $i,j,k$ in that order; all remaining updates merely append
independent tensor factors.  Hence every ordering gives the same state
$\omega_{ijk}\otimes\bigotimes_{v\notin\{i,j,k\}}\tau_v$.

If this state satisfied the ordered Markov property, then at $k$ it would in
particular give $k\ciph i\cd\parents(k)$.  The additional parents are
independent tensor factors, so this condition reduces to
$k\ciph i\cd j$ for $\omega_{ijk}$, contradicting the Pauli construction.
The counterexample is robust: it works for every $0<r<1$ and
$\theta\neq0$.
\end{proof}

Thus transitivity is exactly the graph-theoretic condition under which
order-invariance of the extrinsic kernel construction forces the intrinsic
directed quantum Markov interpretation for all choices of positive definite
kernels. For non-transitive DAGs, order-invariance removes dependence on the
chosen topological ordering, but it still does not force the extrinsic kernels
to become intrinsic parent conditionals.

\section{Logarithmic characterization and reconstruction}
\label{sec:logarithmic-directed}

\subsection{Additive characterization}

The intrinsic factorization of Section~\ref{sec:dag-markov} is nonlinear
because it uses the $\star$-product.  Logarithmic conditionals provide a
complementary additive characterization.  They also lead to a canonical
operator associated with an arbitrary state, whether or not that state is
Markov.

\begin{prop}[Logarithmic Markov identity]
\label{prop:logarithmic-directed-factorization}
Let $\rho\in\mathbb S_1^+(\hilb_V)$. Then $\rho$ satisfies any one--and hence
all--of the directed Markov properties {\rm(O)}, {\rm(L)} and {\rm(G)} if and
only if
\begin{equation}\label{eq:logarithmic-markov-identity}
    \log\rho
    =
    \sum_{v\in V}
    \left(
        \log\rho_{v\cup\parents(v)}
        -
        \log\rho_{\parents(v)}
    \right)
    =
    \sum_{v\in V}h_{v|\parents(v)}(\rho),
\end{equation}
where all terms are embedded into $\cL(\hilb_V)$ and
$\log\rho_{\varnothing}=0$.
\end{prop}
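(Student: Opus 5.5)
Fix a topological ordering $v_1<\cdots<v_n$. The plan is to telescope $\log\rho$ along this ordering and then compare it termwise with the sum of logarithmic conditionals. Since $\log\rho_{V_0}=0$ and $\rho_{V_n}=\rho$, we have the identity
$$
    \log\rho=\sum_{j=1}^n\bigl(\log\rho_{V_j}-\log\rho_{V_{j-1}}\bigr)
    =\sum_{j=1}^n h_{v_j|V_{j-1}}(\rho),
$$
valid for every positive definite $\rho$, with each term embedded into $\cL(\hilb_V)$. Hence \eqref{eq:logarithmic-markov-identity} is equivalent to
$$
    \sum_{j=1}^n\Bigl(h_{v_j|V_{j-1}}(\rho)-h_{v_j|\parents(v_j)}(\rho)\Bigr)=0 .
$$

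For the forward direction, assume (O). Proposition~\ref{prop:one-sided-identities-fixed-state}(iii), applied with $A=\{v_j\}$, $B=V_{j-1}\setminus\parents(v_j)$ and $C=\parents(v_j)$, gives $h_{v_j|V_{j-1}}(\rho)=h_{v_j|\parents(v_j)}(\rho)$ for each $j$. Every summand above therefore vanishes, and \eqref{eq:logarithmic-markov-identity} holds.

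The converse is the main obstacle, because a vanishing sum does not force each summand to vanish as an operator identity. Here I would pass to expectations instead. Assume \eqref{eq:logarithmic-markov-identity}, multiply by $\rho$ and take the trace. Using \eqref{eq:excess-global-information-logarithmic}, this gives
$$
    gI(\dag)_\rho=\trace(\rho\log\rho)-\sum_{v}\trace\bigl(\rho\,h_{v|\parents(v)}(\rho)\bigr)=0 .
$$
This step uses that $\trace(\rho\,h_{v|\parents(v)}(\rho))=\trace(\rho_{v\cup\parents(v)}h_{v|\parents(v)}(\rho))$, which holds because $h_{v|\parents(v)}(\rho)$ acts only on $v\cup\parents(v)$, together with \eqref{eq:logcond-expectation}. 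Theorem~\ref{thm:entropy_relation} then yields (O), (L) and (G). The equivalence with ``any one--and hence all'' follows from Theorem~\ref{thm:dagmarkeq}.

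If the trace argument looked suspicious, the fallback would be to use the decomposition \eqref{eq:ordered-cmi-decomposition}. Taking the $\rho$-expectation of the $j$th summand, $h_{v_j|V_{j-1}}(\rho)-h_{v_j|\parents(v_j)}(\rho)$, gives exactly the negative conditional mutual information $-I(v_j:V_{j-1}\setminus\parents(v_j)\cd\parents(v_j))_\rho$, by \eqref{eq:logcond-expectation} and \eqref{eq:cmi-conditional-entropy}. A vanishing operator sum therefore forces a sum of nonnegative conditional mutual informations to be zero. Each term must then vanish, which is precisely (O).
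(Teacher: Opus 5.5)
Your proof is correct and essentially the paper's. For the forward direction you replace each full-past logarithmic conditional by its parent version via Proposition~\ref{prop:one-sided-identities-fixed-state} and telescope; for the converse you take the $\rho$-expectation of \eqref{eq:logarithmic-markov-identity}, obtain $gI(\dag)_\rho=0$ through \eqref{eq:excess-global-information-logarithmic}, and apply Theorem~\ref{thm:entropy_relation}. One harmless slip in your fallback: the $\rho$-expectation of $h_{v_j|V_{j-1}}(\rho)-h_{v_j|\parents(v_j)}(\rho)$ is $+I(v_j:V_{j-1}\setminus\parents(v_j)\cd\parents(v_j))_\rho$, not its negative, because conditioning on more lowers conditional entropy.
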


\begin{proof}
Fix a topological ordering $v_1<\cdots<v_n$.  If $\rho$ is directed quantum
Markov, Proposition~\ref{prop:one-sided-identities-fixed-state} turns each
ordered Markov restriction into
$h_{v_j|V_{j-1}}(\rho)=h_{v_j|\parents(v_j)}(\rho)$.  Summing these
identities gives
\begin{equation}\label{eq:logarithmic-telescoping}
    \sum_{j=1}^n h_{v_j|V_{j-1}}(\rho)
    =
    \sum_{j=1}^n
    \bigl(\log\rho_{V_j}-\log\rho_{V_{j-1}}\bigr)
    =
    \log\rho.
\end{equation}
Equation~\eqref{eq:logarithmic-telescoping} proves
\eqref{eq:logarithmic-markov-identity}.

Conversely, take the expectation of
\eqref{eq:logarithmic-markov-identity} in $\rho$.
Equation~\eqref{eq:excess-global-information-logarithmic} gives
$gI(\dag)_\rho=0$, so Theorem~\ref{thm:entropy_relation} implies that
$\rho$ is directed quantum Markov.
\end{proof}

Thus directed quantum Markovness is linear in logarithmic coordinates: the
global Hamiltonian $-\log\rho$ is obtained by adding local directed
logarithmic conditionals. This is the noncommutative analogue of
$\log p(x_V)=\sum_{v\in V}\log p(x_v\cd x_{\parents(v)})$.

\subsection{The logarithmic candidate}
\label{subsec:logarithmic-candidate}

The right-hand side of \eqref{eq:logarithmic-markov-identity} can be evaluated
at any positive definite state, whether or not it is directed Markov.
For $\sigma\in\mathbb S_1^+(\hilb_V)$, define
\begin{equation}\label{eq:logarithmic-candidate-definitions}
    T_\dag(\sigma)
    :=
    \exp\left\{
        \sum_{v\in V}h_{v|\parents(v)}(\sigma)
    \right\}.
\end{equation}
Thus $T_\dag(\sigma)$ is the canonical operator reconstructed from the local
logarithmic conditionals of $\sigma$.  The logarithmic conditionals computed
from its own marginals need not coincide with those used in its construction.
Its relation to the input state is summarized by
\begin{equation}\label{eq:relative-entropy-log-candidate}
    gI(\dag)_\sigma
    =
    \trace\!\left(
        \sigma\bigl(\log\sigma-\log T_\dag(\sigma)\bigr)
    \right).
\end{equation}
Indeed, this follows directly from
\eqref{eq:logarithmic-candidate-definitions} and
\eqref{eq:excess-global-information-logarithmic}.
The candidate is always positive definite, but it need not have trace one.
The next result gives the precise trace bounds and identifies the case of
equality in the upper bound.

\begin{thm}
\label{thm:normalized-logarithmic-candidate}
Let $\sigma\in\mathbb S_1^+(\hilb_V)$. Then
$$
    e^{-gI(\dag)_\sigma}
    \leq
    \trace\!\left(T_\dag(\sigma)\right)
    \leq
    1.
$$
If $\trace\!\left(T_\dag(\sigma)\right)=1$, then $T_\dag(\sigma)$ satisfies
the directed Markov properties {\rm(O)}, {\rm(L)} and {\rm(G)}.
\end{thm}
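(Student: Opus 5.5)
The plan is to derive both bounds from the Gibbs variational principle
$$
    \log\trace\bigl(e^{H}\bigr)
    =
    \max_{\omega\in\mathbb S_1^+(\hilb_V)}
    \bigl\{\trace(\omega H)+\ent(\omega)\bigr\},
$$
applied to the self-adjoint operator $H=\sum_{v\in V}h_{v|\parents(v)}(\sigma)$. By definition $T_\dag(\sigma)=e^{H}$. The maximum is attained uniquely at $\omega=e^{H}/\trace(e^{H})$; this follows from Klein's inequality, since $\trace(\omega H)+\ent(\omega)=\log\trace(e^H)-D\bigl(\omega\,\|\,e^H/\trace(e^H)\bigr)$.

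\emph{Lower bound.} Evaluate the functional at $\omega=\sigma$. By \eqref{eq:logcond-expectation}, $\trace(\sigma H)=-\sum_v\ent(v\cd\parents(v))_\sigma$. Hence
$$
    \log\trace T_\dag(\sigma)\ \ge\ \ent(V)_\sigma-\sum_v\ent(v\cd\parents(v))_\sigma=-gI(\dag)_\sigma .
$$
Equivalently, this is \eqref{eq:relative-entropy-log-candidate} combined with the variational inequality.

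\emph{Upper bound.} Fix an arbitrary state $\omega$ and a topological ordering $v_1<\cdots<v_n$. The chain rule \eqref{eq:entropy-chain-rule-topological} gives $\ent(\omega)=\sum_j\ent(v_j\cd V_{j-1})_\omega$. Strong subadditivity, in the form \eqref{eq:cmi-conditional-entropy}, gives $\ent(v_j\cd V_{j-1})_\omega\le\ent(v_j\cd\parents(v_j))_\omega$. Each $h_{v|\parents(v)}(\sigma)$ acts only on $v\cup\parents(v)$, so $\trace(\omega\, h_{v|\parents(v)}(\sigma))$ involves only the marginal $\omega_{v\cup\parents(v)}$. Writing $p=\parents(v)$, we therefore get
$$
    \trace(\omega H)+\ent(\omega)
    \le
    \sum_{v}\Bigl[\trace\bigl(\omega_{v\cup p}(\log\sigma_{v\cup p}-\log\sigma_{p})\bigr)+\ent(\omega_{v\cup p})-\ent(\omega_p)\Bigr].
$$
Each bracket equals $-\bigl[D(\omega_{v\cup p}\|\sigma_{v\cup p})-D(\omega_p\|\sigma_p)\bigr]$. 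This is $\le0$ by monotonicity of relative entropy under the partial trace $\trace_v$; for roots the bracket is simply $-D(\omega_v\|\sigma_v)\le0$. Taking the maximum over $\omega$ gives $\log\trace T_\dag(\sigma)\le0$. The main point to check carefully is this bookkeeping: the marginal identities, the conventions $\log\sigma_\varnothing=0$ and $\ent(\varnothing)=0$, and the exact sign pattern of the relative entropy terms. Everything else is standard.

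\emph{Equality case.} Suppose $\trace T_\dag(\sigma)=1$, and put $\tau=T_\dag(\sigma)$, which is then a positive definite state and is the unique maximizer with maximal value $0$. Since $\omega=\tau$ attains $0$, every inequality in the chain above is an equality at $\omega=\tau$. In particular the strong subadditivity steps are equalities, so $\ent(V)_\tau=\sum_v\ent(v\cd\parents(v))_\tau$, that is, $gI(\dag)_\tau=0$. Theorem~\ref{thm:entropy_relation} then shows that $\tau$ satisfies (O), (L) and (G). As a by-product, the relative-entropy brackets also vanish. This yields $D(\tau_{v\cup\parents(v)}\|\sigma_{v\cup\parents(v)})=D(\tau_{\parents(v)}\|\sigma_{\parents(v)})$, which is not needed for the statement itself.
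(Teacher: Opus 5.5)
Your proof is correct and is essentially the paper's argument, recast through the Gibbs variational principle. Your lower bound is the paper's Klein step $0\le D(\sigma\|\tau)=gI(\dag)_\sigma+\log c$ with $\tau=T_\dag(\sigma)/c$. Evaluating your upper-bound chain at $\omega=\tau$ reproduces the paper's exact identity $\log c=-gI(\dag)_\tau-\sum_{v}\bigl[D(\tau_{v\cup P_v}\|\sigma_{v\cup P_v})-D(\tau_{P_v}\|\sigma_{P_v})\bigr]$, where $P_v=\parents(v)$ and your strong-subadditivity slack is precisely $gI(\dag)_\tau$. The equality case then goes through as in the paper, via $gI(\dag)_\tau=0$ and Theorem~\ref{thm:entropy_relation}.
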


\begin{proof}
Put
$c=\trace\!\left(T_\dag(\sigma)\right)$ and
$\tau=T_\dag(\sigma)/c$ so that $\log\tau=\log T_\dag(\sigma)-(\log c)I$. Klein's inequality and
\eqref{eq:relative-entropy-log-candidate} give
$$
    0
    \leq
    D(\sigma\|\tau)
    =
    gI(\dag)_\sigma+\log c,
$$
which proves the lower bound. Write $P_v=\parents(v)$. We have
$$
\log c
    =
    \trace\!\left(
        \tau\bigl(\log T_\dag(\sigma)-\log\tau\bigr)
    \right)
    =
    \sum_{v\in V}
    \trace\!\left(\tau h_{v|P_v}(\sigma)\right)
    -
    \trace\!\left(\tau\log\tau\right).
$$
Using \eqref{eq:logarithmic-candidate-definitions} and
\eqref{eq:excess-global-information-logarithmic}, we get
\begin{equation}\label{eq:normalized-candidate-gap}
\begin{aligned}
    \log c
    &=
    -gI(\dag)_\tau
    +
    \sum_{v\in V}
    \trace\!\left(
        \tau\bigl(h_{v|P_v}(\sigma)-h_{v|P_v}(\tau)\bigr)
    \right)\\
    &=
    -gI(\dag)_\tau
    -
    \sum_{v\in V}
    \left[
        D\!\left(
            \tau_{v\cup P_v}\middle\|\sigma_{v\cup P_v}
        \right)
        -
        D\!\left(
            \tau_{P_v}\middle\|\sigma_{P_v}
        \right)
    \right].
\end{aligned}
\end{equation}
Every bracket in \eqref{eq:normalized-candidate-gap} is nonnegative by
data processing under the partial trace 
\citep{lindblad1975completely}, while
$gI(\dag)_\tau\geq0$ by Theorem~\ref{thm:entropy_relation}. Therefore
\eqref{eq:normalized-candidate-gap} gives $\log c\leq0$, proving the upper
bound. If $c=1$, then $\tau=T_\dag(\sigma)$ and
\eqref{eq:normalized-candidate-gap} forces $gI(\dag)_\tau=0$.
Theorem~\ref{thm:entropy_relation} now gives the three directed Markov
properties.
\end{proof}

\begin{cor}[Fixed points of the logarithmic construction]
\label{cor:logarithmic-fixed-points}
A state $\sigma\in\mathbb S_1^+(\hilb_V)$ satisfies any one--and hence
all--of the directed Markov properties {\rm(O)}, {\rm(L)} and {\rm(G)} if and
only if $T_\dag(\sigma)=\sigma$.
\end{cor}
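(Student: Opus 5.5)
The plan is to prove the two implications separately, relying on Proposition~\ref{prop:logarithmic-directed-factorization} and Theorem~\ref{thm:normalized-logarithmic-candidate}.

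\emph{Markov implies fixed point.} Suppose $\sigma$ is directed quantum Markov. Then Proposition~\ref{prop:logarithmic-directed-factorization} gives
$$
\log\sigma=\sum_{v\in V}h_{v|\parents(v)}(\sigma).
$$
Exponentiating both sides and comparing with \eqref{eq:logarithmic-candidate-definitions} yields $T_\dag(\sigma)=\exp(\log\sigma)=\sigma$. This step needs only that $\sigma$ is positive definite, so that the functional calculus for $\exp$ and $\log$ is inverse.

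\emph{Fixed point implies Markov.} Suppose $T_\dag(\sigma)=\sigma$. Then $\trace(T_\dag(\sigma))=\trace(\sigma)=1$. The second part of Theorem~\ref{thm:normalized-logarithmic-candidate} states that a trace-one candidate is itself directed quantum Markov. Hence $T_\dag(\sigma)=\sigma$ is directed quantum Markov.

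As an alternative route, one can take the logarithm of the fixed-point identity to get $\log\sigma=\sum_v h_{v|\parents(v)}(\sigma)$. The converse half of Proposition~\ref{prop:logarithmic-directed-factorization} then applies directly. Equivalently, \eqref{eq:relative-entropy-log-candidate} gives $gI(\dag)_\sigma=D(\sigma\|\sigma)=0$, and Theorem~\ref{thm:entropy_relation} finishes the argument.

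There is no real obstacle. The only point needing care is that $\log$ is injective on positive definite operators, so $T_\dag(\sigma)=\sigma$ is equivalent to the logarithmic identity \eqref{eq:logarithmic-markov-identity}. The claim ``any one--and hence all'' of (O), (L), (G) then follows from Theorem~\ref{thm:dagmarkeq}.
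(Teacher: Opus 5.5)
Your proof is correct and is essentially the paper's argument. The paper notes that $T_\dag(\sigma)=\sigma$ is equivalent, by injectivity of $\log$ on positive definite operators, to the logarithmic identity, and then applies Proposition~\ref{prop:logarithmic-directed-factorization} in both directions; this is your forward direction together with your ``alternative route.'' Your primary converse, via the trace-one clause of Theorem~\ref{thm:normalized-logarithmic-candidate}, is also valid. It is just a slightly longer detour that ends at the same entropy criterion (Theorem~\ref{thm:entropy_relation}).
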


\begin{proof}
By \eqref{eq:logarithmic-candidate-definitions},
$T_\dag(\sigma)=\sigma$ is equivalent to
\eqref{eq:logarithmic-markov-identity}.  The conclusion follows from
Proposition~\ref{prop:logarithmic-directed-factorization}.
\end{proof}

\section{Quantum Markov equivalence}
\label{sec:markov-equivalence}
Until now all graphical statements have concerned DAGs.  We first recall the
undirected counterpart in order to compare the two settings.  For an
undirected graph $\graph$ on $V$, the notation $A\gse B\cd C$ means that every
path from $A$ to $B$ meets $C$.  A state $\rho$ is \emph{globally quantum
Markov} with respect to $\graph$ if
$$
    A\gse B\cd C
    \quad\Longrightarrow\quad
    A\ciph B\cd C.
$$
This is the undirected quantum Markov property studied in
\citet[Sections~2.4, 2.5, and~3.2]{lauritzen:zwiernik:26}.

Two directed or undirected graphs are \emph{separation equivalent} if they
encode the same separation statements. By definition, such graphs have the same
Markov states. The results below go further: for perfect orientations and for
Markov-equivalent DAGs, the corresponding entropy and logarithmic quantities
agree even away from the Markov states.

\subsection{Directed and undirected Markov properties}

Recall that the moral graph $\dag^m$ joins every pair of parents of a common
child and then removes all arrowheads.  Every separation in $\dag^m$ implies
the corresponding $d$-separation in $\dag$; hence a directed quantum Markov
state for $\dag$ is globally quantum Markov for $\dag^m$.
For a directed quantum Markov state,
Proposition~\ref{prop:logarithmic-directed-factorization} gives a parallel
operator statement: $\log\rho$ is a sum of terms supported on
$v\cup\parents(v)$ and on $\parents(v)$, both of which are complete in
$\dag^m$. Thus the logarithmic decomposition is clique-local for the moral
graph.  This observation is only one-way in general: moralization can remove
directed separation statements, and a clique-local logarithmic sum need not
imply those statements. Exact agreement arises when $\dag$ is
\emph{perfect}, meaning
that every parent set $\parents(v)$ is complete in its skeleton.

\begin{prop}\label{prop:skeleton-equivalence}
A DAG $\dag$ is separation equivalent to its skeleton $\ske(\dag)$ if and
only if $\dag$ is perfect.
\end{prop}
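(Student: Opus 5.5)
We prove both directions by comparing $d$-separation in $\dag$ with ordinary separation in the undirected graph $\ske(\dag)$. Separation equivalence means that for all disjoint $A,B,S$ we have $A\dse B\cd S$ if and only if $A\gse B\cd S$ in $\ske(\dag)$. The key tool is the standard moralization criterion (see \citet[Chapters~2 and~3]{lauritzen:26}): $A\dse B\cd S$ holds if and only if $A$ and $B$ are separated by $S$ in the moral graph $(\dag_{\mathrm{An}(A\cup B\cup S)})^m$ of the smallest ancestral set containing $A\cup B\cup S$. We also use two facts about collider-free walks. A path in $\ske(\dag)$ that avoids $S$ has no vertex in $S$, so if it has no colliders it is active.

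For necessity, suppose $\dag$ is not perfect. Then some vertex $v$ has two parents $a,b$ that are not adjacent. Take $S=\varnothing$. The walk $a\to v\leftarrow b$ has the collider $v\notin S$. More generally, every walk between $a$ and $b$ containing a collider is blocked by $\varnothing$. A collider-free walk would, however, be active, so we choose $S$ to handle this case too.

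Set $S=\mathrm{An}(\{a,b\})\setminus\{a,b\}$, or more simply use the standard fact that two non-adjacent vertices are $d$-separated by some set. Concretely, take $S=\parents(a)\cup\parents(b)$ minus $\{a,b\}$ after ordering so that $b$ is not an ancestor of $a$; this $d$-separates $a$ and $b$ and contains no descendant of $v$. This is the step where care is needed. The cleanest choice is $S=\mathrm{an}(a)\cup\mathrm{an}(b)\setminus\{a,b\}$. In the moral graph of the ancestral set $\mathrm{An}(\{a,b\})\cup S=\mathrm{An}(\{a,b\})$, every neighbour of $a$ other than $b$ lies in $S$. Here $a$ and $b$ are non-adjacent there unless they share a child inside the ancestral set. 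Such a shared child would be an ancestor of $a$ or $b$ and a child of both, which creates a directed cycle.

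So $a\dse b\cd S$, while $v\notin S$: indeed $v$ is a child of $a$, so $v$ being an ancestor of $a$ would give a cycle. Hence the path $a - v - b$ in $\ske(\dag)$ avoids $S$, and $a\gse b\cd S$ fails. The two graphs are therefore not separation equivalent.

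For sufficiency, assume $\dag$ is perfect. Then for every ancestral set $W$ the moral graph $(\dag_W)^m$ equals $\ske(\dag_W)$, because moralization adds only edges between parents of a common child, and these are already present. Using the moralization criterion, we show both directions of the separation correspondence.

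First, suppose $A\gse B\cd S$ in $\ske(\dag)$. Then $A$ and $B$ are also separated by $S$ in the subgraph $\ske(\dag_W)$ with $W=\mathrm{An}(A\cup B\cup S)$. Since this subgraph is the moral graph, the criterion gives $A\dse B\cd S$.

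Conversely, suppose some path $\pi$ from $A$ to $B$ in $\ske(\dag)$ avoids $S$. We must produce an active walk. We take $\pi$ to be a shortest such path. If $\pi$ has a collider $x\to y\leftarrow z$, perfectness makes $x$ and $z$ adjacent, so $\pi$ can be shortened while still avoiding $S$. This contradicts minimality, so $\pi$ has no colliders. All its vertices lie outside $S$, so $\pi$ is active relative to $S$, and $A\dse B\cd S$ fails. This direction is in fact elementary and does not need the moralization criterion.

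The main obstacle is the necessity direction. There we must exhibit a conditioning set that $d$-separates the two unmarried parents while excluding their common child. The ancestral-set choice above is what I would verify most carefully.
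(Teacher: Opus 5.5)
Your proof is correct. The paper gives no argument of its own here; it only cites Proposition~3.33 of Lauritzen's book. So your write-up is a self-contained replacement for a citation, not a variant of an argument in the paper.

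What your version buys is that it shows exactly where perfectness enters:
\begin{enumerate}
\item It identifies $(\dag_W)^m$ with $\ske(\dag_W)$ for ancestral $W$. With the moralization criterion, this gives skeleton separation $\Rightarrow$ $d$-separation.
\item It shortcuts colliders on a shortest $S$-avoiding skeleton path. This gives the converse directly, without moralization.
\end{enumerate}
The citation buys brevity.

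For necessity, your final witness $S=\mathrm{An}(\{a,b\})\setminus\{a,b\}$ works. In the moral graph of $\dag_{\mathrm{An}(\{a,b\})}$, every vertex other than $a,b$ lies in $S$. Moreover, $a$ and $b$ cannot acquire a moral edge, since a common child inside $\mathrm{An}(\{a,b\})$ would close a directed cycle.

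Two presentational fixes are needed:
\begin{enumerate}
\item Delete the exploratory false starts ($S=\varnothing$, and the unproved $\parents(a)\cup\parents(b)$ variant), keeping only the verified choice.
\item When you check $v\notin S$, say that $v$ is a child of \emph{both} $a$ and $b$, so it can be an ancestor of neither. You only treat $a$.
\end{enumerate}

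If you prefer to avoid the moralization criterion in the necessity direction, a simpler witness is $S=\parents(b)$, with the labels chosen so that $a\notin\desc(b)$ (possible by acyclicity). Then $a\dse b\cd\parents(b)$ follows from the local Markov property of $d$-separation, because $a\in\nondesc(b)\setminus\parents(b)$. Also $v\notin\parents(b)$, since $v$ is a child of $b$. Hence the skeleton path $a - v - b$ avoids $S$.
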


\begin{proof}
\citet[Proposition~3.33]{lauritzen:26}.
\end{proof}

An undirected graph is \emph{chordal} if every cycle of length at least four
has a chord.  The skeleton of a perfect DAG is chordal, and every chordal
graph $\graph$ has a perfect directed version $\dag$ with
$\graph=\ske(\dag)$.

Trees give the simplest example.  Choose a root and orient every edge away
from it.  Each non-root vertex then has exactly one parent, so the orientation
is perfect and Proposition~\ref{prop:skeleton-equivalence} shows that it is
separation equivalent to the original tree.  Consequently, a state is
directed quantum Markov for any such rooted orientation exactly when it is
globally quantum Markov for the undirected tree.  This matters because the
directed recursion, entropy criterion, and logarithmic identity can therefore
be used for the tree model for any choice of root, without changing its set
of Markov states.

For a chordal graph, its maximal cliques can be arranged in a junction tree:
the cliques containing any fixed vertex form a connected subtree.  Write
$\cliques$ for the maximal cliques, $\sep$ for the distinct intersections
labelling the junction-tree edges, and $\nu(S)$ for the number of times a
separator $S$ occurs.  This is the clique--separator notation used in
\citet[Section~3.2]{lauritzen:zwiernik:26}.  For every perfect directed
version $\dag$ of $\graph$ and every positive definite state $\rho$ we have
\begin{equation}\label{eq:chordal-directed-global-information}
\begin{aligned}
    gI(\graph)_\rho
    &:=
    \sum_{C\in\cliques}\ent(C)_\rho
    -
    \sum_{S\in\sep}\nu(S)\ent(S)_\rho
    -
    \ent(V)_\rho \\
    &=
    \sum_{v\in V}\ent(v\cd\parents(v))_\rho-\ent(V)_\rho
    =
    gI(\dag)_\rho.
\end{aligned}
\end{equation}
The middle equality is the usual clique--separator cancellation: successively
eliminating vertices in a perfect ordering cancels the nonmaximal directed
families against parent sets, leaving the maximal cliques and the junction-tree
separators.

\begin{cor}[Perfect orientations]\label{cor:perfect-orientations}
Let $\graph$ be chordal, let $\dag$ be a perfect directed version of
$\graph$, and let $\rho\in\mathbb S_1^+(\hilb_V)$.  The following statements
are equivalent:
\begin{enumerate}[\rm(i)]
\item $\rho$ is directed quantum Markov with respect to $\dag$;
\item $\rho$ is globally quantum Markov with respect to $\graph$;
\item $gI(\graph)_\rho=gI(\dag)_\rho=0$;
\item the clique--separator logarithmic identity holds:
\begin{equation}\label{eq:chordal-logarithmic-factorization}
    \log\rho
    =
    \sum_{C\in\cliques}\log\rho_C
    -
    \sum_{S\in\sep}\nu(S)\log\rho_S.
\end{equation}
\end{enumerate}
\end{cor}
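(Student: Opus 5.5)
The plan is to establish the cycle (i)$\Leftrightarrow$(ii), (i)$\Leftrightarrow$(iii), and (iii)$\Leftrightarrow$(iv). All of the analytic input is already available from Theorem~\ref{thm:entropy_relation}, Proposition~\ref{prop:logarithmic-directed-factorization}, Proposition~\ref{prop:skeleton-equivalence}, and the identity \eqref{eq:chordal-directed-global-information}.

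For (i)$\Leftrightarrow$(ii), use that $\dag$ is perfect with $\ske(\dag)=\graph$. Proposition~\ref{prop:skeleton-equivalence} then shows that $d$-separation in $\dag$ and separation in $\graph$ coincide. Hence the global Markov property (G) for $\dag$ and the undirected global Markov property for $\graph$ are literally the same list of conditional independence statements. By Theorem~\ref{thm:dagmarkeq}, (G) is equivalent to directed quantum Markovness, and this gives (i)$\Leftrightarrow$(ii).

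For (i)$\Leftrightarrow$(iii), identity \eqref{eq:chordal-directed-global-information} gives $gI(\graph)_\rho=gI(\dag)_\rho$ for every positive definite $\rho$. Theorem~\ref{thm:entropy_relation} says that $gI(\dag)_\rho=0$ exactly when $\rho$ is directed quantum Markov.

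For (iii)$\Leftrightarrow$(iv), I will mirror the proof of Proposition~\ref{prop:logarithmic-directed-factorization}. The key observation is an operator analogue of the entropy cancellation behind \eqref{eq:chordal-directed-global-information}:
\[
\sum_{v\in V}\bigl(\log\rho_{v\cup\parents(v)}-\log\rho_{\parents(v)}\bigr)
=\sum_{C\in\cliques}\log\rho_C-\sum_{S\in\sep}\nu(S)\log\rho_S
\]
for every positive definite $\rho$. This should hold because the cancellation is purely combinatorial: it matches families $v\cup\parents(v)$ and parent sets as subsets of $V$, and it uses nothing about entropy. The same bookkeeping therefore applies verbatim with $\ent(\cdot)_\rho$ replaced by $\log\rho_{(\cdot)}$. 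Concretely, take a perfect ordering $v_1<\cdots<v_n$. Each nonmaximal family $v_j\cup\parents(v_j)$ equals $\parents(v_m)$ for some later $v_m$, since it is contained in a later clique's family. These terms cancel, leaving each maximal clique once and each junction-tree separator with multiplicity $\nu(S)$.

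Given this identity, (iv) is exactly \eqref{eq:logarithmic-markov-identity}, so Proposition~\ref{prop:logarithmic-directed-factorization} gives (iv)$\Leftrightarrow$(i). Alternatively, taking the $\rho$-expectation of (iv) yields $gI(\graph)_\rho=0$ directly via \eqref{eq:logcond-expectation}.

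The main obstacle is making the combinatorial cancellation rigorous at the level of set multiplicities, rather than citing it informally as the paper does for entropies. To handle it, I will argue by induction on $n$, removing the last vertex $v_n$ of the perfect ordering. Its family $v_n\cup\parents(v_n)$ is a clique of $\graph$ because $\dag$ is perfect, and $v_n$ is simplicial. There are two cases. If $v_n\cup\parents(v_n)$ is a maximal clique, then $\parents(v_n)$ is the separator attaching it to the junction tree of the graph on $V\setminus\{v_n\}$. If it is not maximal, then the clique and separator lists of the smaller graph are unchanged, except that a maximal clique $C\ni v_n$ is replaced by $C\setminus\{v_n\}$, possibly merging with a neighbour. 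I will check both cases against the multiplicity $\nu$. Since the identity holds as an identity of formal sums of subsets, it transfers to any set function, and in particular to both $\ent(\cdot)_\rho$ and $\log\rho_{(\cdot)}$.
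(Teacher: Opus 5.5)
Your argument is correct and is the paper's proof. (i)$\Leftrightarrow$(ii) follows from Proposition~\ref{prop:skeleton-equivalence} together with Theorem~\ref{thm:dagmarkeq}. (i)$\Leftrightarrow$(iii) follows from \eqref{eq:chordal-directed-global-information} and Theorem~\ref{thm:entropy_relation}. (i)$\Leftrightarrow$(iv) follows from the logarithmic form of the clique--separator cancellation together with Proposition~\ref{prop:logarithmic-directed-factorization}. The paper simply asserts that cancellation as standard.

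The one thing to repair is the case split in the induction you sketch for the cancellation.
\begin{itemize}
\item Your second case never occurs. Because $v_n$ is last, every neighbour of $v_n$ in $\graph$ is a parent. So $\{v_n\}\cup\parents(v_n)$ is the closed neighbourhood of a simplicial vertex, and it is \emph{always} a maximal clique, indeed the only one containing $v_n$.
\item Your first case is false as stated. For the DAG $a\to b$, the set $\parents(b)=\{a\}$ is itself the unique maximal clique of the graph on $\{a\}$. The cliques of $\graph$ arise by replacing $\{a\}$ with $\{a,b\}$, not by attaching $\{a,b\}$ through a separator $\{a\}$.
\end{itemize}
The dichotomy you need is whether $\parents(v_n)$ is a maximal clique of the induced graph on $V\setminus\{v_n\}$.
\begin{itemize}
\item If it is, replace that junction-tree node by $\{v_n\}\cup\parents(v_n)$; the separators are unchanged.
\item If it is not, attach $\{v_n\}\cup\parents(v_n)$ to a clique containing $\parents(v_n)$, with new separator $\parents(v_n)$.
\end{itemize}
In both subcases the formal clique--separator sum changes by $+\bigl(\{v_n\}\cup\parents(v_n)\bigr)-\parents(v_n)$. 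This is exactly the new directed summand, so the induction closes.

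Finally, the identity of formal sums of subsets holds only modulo copies of $\varnothing$. Already $\parents(v_1)=\varnothing$ has no counterpart on the clique--separator side. This is harmless, since $\ent(\varnothing)_\rho=0$ and $\log\rho_\varnothing=0$ under the paper's conventions, but your claim that the identity transfers to any set function should be restricted to set functions vanishing at $\varnothing$.
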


\begin{proof}
Since $\dag$ is perfect, Proposition~\ref{prop:skeleton-equivalence} gives
{\rm(i)}$\Leftrightarrow${\rm(ii)}.  The equivalence with {\rm(iii)} follows
from \eqref{eq:chordal-directed-global-information} and
Theorem~\ref{thm:entropy_relation}; equivalently, it is the chordal entropy
criterion of \citet[Proposition~3.9]{lauritzen:zwiernik:26}.

The same clique--separator cancellation as in
\eqref{eq:chordal-directed-global-information}, now applied to logarithms,
gives
\begin{equation}\label{eq:chordal-directed-logarithmic-sum}
    \sum_{v\in V}h_{v|\parents(v)}(\rho)
    =
    \sum_{C\in\cliques}\log\rho_C
    -
    \sum_{S\in\sep}\nu(S)\log\rho_S.
\end{equation}
Identity~\eqref{eq:chordal-directed-logarithmic-sum} and
Proposition~\ref{prop:logarithmic-directed-factorization} give
{\rm(i)}$\Leftrightarrow${\rm(iv)}.
\end{proof}

Thus a chordal graph and any of its perfect orientations describe exactly the
same quantum Markov states.  In this case the directed entropy and logarithmic
identities reduce to their clique--separator counterparts.
Equation~\eqref{eq:chordal-logarithmic-factorization} is the appropriate
noncommutative junction-tree factorization; it becomes an ordinary product
only when the relevant operators commute.

\subsection{Equivalent DAGs}
We now compare different orientations rather than directed and undirected
graphs.  The question is when changing arrow directions leaves all
$d$-separation statements, and hence all directed quantum Markov states,
unchanged.
An \emph{unshielded collider} is an induced subgraph of the form
$\alpha\to\gamma\leftarrow\beta$.

\begin{thm}[DAG Markov equivalence]\label{thm:dagequiv}
Two DAGs $\dag_1=(V,E_1)$ and $\dag_2=(V,E_2)$ are separation equivalent if
and only if they have the same skeleton and the same unshielded colliders.
\end{thm}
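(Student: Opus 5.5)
This is the classical Verma--Pearl characterization. Since separation equivalence is a purely graph-theoretic statement about $d$-separation, the paper will most likely cite it (e.g.\ \citet{lauritzen:26}), but I would prove it combinatorially, in two directions.

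\emph{Necessity.} I would first show that the skeleton is determined by $d$-separation. Two vertices $\alpha,\beta$ are adjacent in a DAG if and only if no set $S\subseteq V\setminus\{\alpha,\beta\}$ $d$-separates them. If they are adjacent, the one-edge walk has no interior vertices and so is always active. If they are nonadjacent, with $\beta$ not an ancestor of $\alpha$ say, then $S=\parents(\beta)$ separates them. This follows from the local Markov property for the independence model $\dse$, which satisfies the semi-graphoid axioms, or directly by checking walks: any walk leaving $\beta$ through a parent is blocked at that parent, which is a non-collider in $S$, and any walk leaving $\beta$ through a child must eventually contain a collider that is not in $\parents(\beta)$ and has no descendant issue to worry about, since we use walks. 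Hence equal separation statements give equal skeletons.

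For unshielded colliders, take nonadjacent $\alpha,\beta$ with a common neighbour $\gamma$. If $\alpha\to\gamma\leftarrow\beta$, then every separating set of $\alpha,\beta$ omits $\gamma$, because otherwise the walk $\alpha,\gamma,\beta$ is active. If $\gamma$ is not a collider, then every separating set contains $\gamma$, because otherwise that same walk is active. Both separating conditions are expressible through $\dse$, and a separating set exists by the first step. So the collider status of each unshielded triple is determined by the separation statements.

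\emph{Sufficiency.} Assume the two DAGs have the same skeleton and the same unshielded colliders, and suppose a walk $\omega$ from $A$ to $B$ is active given $S$ in $\dag_1$. Using the walk formulation, I would first shorten $\omega$ to a minimal active walk, in the sense that no shortcut along a skeleton edge keeps it active. Then I would argue that $\omega$ remains active in $\dag_2$. At an interior triple $\alpha_{i-1},\alpha_i,\alpha_{i+1}$ whose endpoints are nonadjacent, the collider status agrees in the two graphs, so activity is preserved there. The remaining case is a shielded triple, where $\alpha_{i-1}$ and $\alpha_{i+1}$ are adjacent; there I would either shortcut the walk through that edge or reroute it. I expect this shielded case to be the main obstacle, because a collider in one graph can be a non-collider in the other.

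To handle it, I would follow the standard argument via moral graphs of ancestral sets. Active walks given $S$ correspond to paths in $(\dag_{\mathrm{An}(A\cup B\cup S)})^m$ avoiding $S$, so it suffices to show that the two DAGs have the same ancestral sets relevant here and the same moral edges. This is not true verbatim, so the alternative I would try is an induction on the number of differing edges. Chickering's result says that equivalent DAGs are connected by single covered-edge reversals, where an edge $u\to v$ is covered if $\parents(v)=\parents(u)\cup\{u\}$. A covered reversal preserves both the skeleton and the unshielded colliders, and by a local check of the walks through $u,v$ it also preserves $d$-separation. The difficult part is establishing Chickering's connectivity lemma, namely that some differing edge is always covered. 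For this I would pick a differing edge $u\to v$ in $\dag_1$ with $v$ minimal in a topological ordering and $u$ maximal among such parents of $v$, and then check coverage using the shared unshielded colliders.
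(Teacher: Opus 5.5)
Your proposal is correct in outline but takes a different route from the paper. The paper does not prove the theorem at all: it cites the classical criterion of \citet{frydenberg:90} and \citet{verma:pearl:90} (see also \citet[Theorem~3.40]{lauritzen:26}).

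Your necessity half is the standard argument and is complete. Adjacency is exactly the absence of any separating set, with $\parents(\beta)$ separating when $\beta$ is not an ancestor of $\alpha$. An unshielded triple $\alpha,\gamma,\beta$ is a collider exactly when $\gamma$ lies in no separating set.

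For sufficiency, your final plan is sound: connect $\dag_1$ to $\dag_2$ by covered reversals, and show that each covered reversal preserves $d$-separation. It also buys something the citation does not. You get a self-contained proof that establishes, purely combinatorially, the covered-reversal connectivity the paper later takes from \citet{chickering:95} in Proposition~\ref{prop:equivalence-class-invariance}. The citation, in exchange, keeps the paper short and avoids the case analysis.

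Two points need care. First, do not quote Chickering's Theorem~2 as stated. His notion of equivalence, and his proof that covered reversals preserve it, rest on the Verma--Pearl criterion, which is exactly what you are proving. Both ingredients must therefore be proved directly, as you intend.

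Your selection rule does yield the connectivity lemma. Take $v$ minimal, in a topological order of $\dag_1$, among heads of edges reversed in $\dag_2$, and $u$ maximal in that order among such parents of $v$. There are two ways $u\to v$ could fail to be covered, and both lead to contradictions:
\begin{itemize}
\item A $\dag_1$-parent $z$ of $u$ outside $\parents_1(v)$ is nonadjacent to $v$ by acyclicity. Since $z\to u$ is not reversed (minimality of $v$), it creates an unshielded collider $z\to u\leftarrow v$ of $\dag_2$ that $\dag_1$ lacks.
\item A $\dag_1$-parent $z\neq u$ of $v$ outside $\parents_1(u)$ either forms an unshielded collider $u\to v\leftarrow z$ of $\dag_1$ destroyed in $\dag_2$, or satisfies $u\to z$. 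In the latter case, maximality of $u$ and minimality of $v$ give the cycle $u\to z\to v\to u$ in $\dag_2$.
\end{itemize}

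Second, your ``local check of walks'' for a covered reversal is genuinely case-based. For instance, a segment $p\to u\to v\leftarrow q$ with $p,q\in\parents(u)$ must be rerouted as $p\to v\leftarrow q$ in the reversed graph $\dag'$.

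A cleaner argument reuses the result behind Theorem~\ref{thm:dagmarkeq}:
\begin{itemize}
\item Since $u\to v$ is covered, there is no directed path of length at least two from $u$ to $v$. Its last interior vertex would be a parent of $u$, closing a cycle. Hence some topological order places $v$ immediately after $u$, and swapping them gives a topological order of $\dag'$.
\item All other vertices keep their parent and predecessor sets, so their ordered Markov statements are unchanged.
\item With $P=\parents(u)$ and $R=\pre(u)\setminus P$, the pair $u\perp R\cd P$, $v\perp R\cd P\cup\{u\}$ and the pair $v\perp R\cd P$, $u\perp R\cd P\cup\{v\}$ are each equivalent in any semi-graphoid to $\{u,v\}\perp R\cd P$, by contraction, decomposition and weak union.
\item $d$-separation in $\dag'$ is a semi-graphoid satisfying (O) for $\dag'$, hence (O) for $\dag$. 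The implication (O)$\Rightarrow$(G) of \citet{lauritzen:etal:90} then transfers every $d$-separation of $\dag$ to $\dag'$, and symmetry gives the converse.
\end{itemize}
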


\begin{proof}
This is the classical Markov-equivalence criterion
\citep{frydenberg:90,verma:pearl:90}; see also
\citet[Theorem~3.40]{lauritzen:26}.
\end{proof}

When two directed acyclic graphs $\dag_1$ and $\dag_2$ are separation
equivalent, a state $\rho$ is directed quantum Markov with respect to
$\dag_1$ if and only if it is directed quantum Markov with respect to
$\dag_2$.  We therefore also call such DAGs \emph{quantum Markov equivalent}.

The preceding conclusion concerns the zero set of $gI(\dag)_\rho$.  In fact,
both the value of this quantity and the full logarithmic candidate are
invariant throughout a Markov-equivalence class.

\begin{prop}[Equivalence-class invariance]
\label{prop:equivalence-class-invariance}
Let $\dag_1$ and $\dag_2$ be separation-equivalent DAGs on $V$.  For every
$\sigma\in\mathbb S_1^+(\hilb_V)$,
$$
    T_{\dag_1}(\sigma)=T_{\dag_2}(\sigma)
    \qquad\text{and}\qquad
    gI(\dag_1)_\sigma=gI(\dag_2)_\sigma.
$$
\end{prop}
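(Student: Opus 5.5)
We will prove the stronger operator identity
$$
    \sum_{v\in V}h^{(1)}_{v|\parents_1(v)}(\sigma)
    =
    \sum_{v\in V}h^{(2)}_{v|\parents_2(v)}(\sigma)
$$
for every positive definite $\sigma$, where $h^{(i)}$ denotes the logarithmic conditional formed with the parent sets of $\dag_i$. Both claims follow from it. Exponentiating gives $T_{\dag_1}(\sigma)=T_{\dag_2}(\sigma)$. The equality $gI(\dag_1)_\sigma=gI(\dag_2)_\sigma$ then follows from the logarithmic representation \eqref{eq:excess-global-information-logarithmic}, since the term $\trace(\sigma\log\sigma)$ is common to both sides.

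The combinatorial input is the classical result of Chickering (and Meek) that separation-equivalent DAGs are connected by a finite sequence of \emph{covered edge reversals}. Each intermediate DAG in this sequence is separation equivalent to the two endpoints. An edge $u\to w$ is covered if $\parents(w)=\parents(u)\cup\{u\}$. The proof of this connectivity result is purely graph-theoretic. It rests on Theorem~\ref{thm:dagequiv}, the skeleton--unshielded-collider criterion, so we may import it directly. It therefore suffices to prove the identity when $\dag_2$ is obtained from $\dag_1$ by reversing a single covered edge $u\to w$.

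Write $P=\parents_1(u)$. Then $\parents_1(w)=P\cup\{u\}$. After reversal we have $\parents_2(w)=P$ and $\parents_2(u)=P\cup\{w\}$, and every other vertex keeps its parent set. This is because reversing $u\to w$ changes only the parent sets of $u$ and $w$. In $\dag_1$ the two affected terms sum to
$$
    (\log\sigma_{u\cup P}-\log\sigma_P)+(\log\sigma_{w\cup u\cup P}-\log\sigma_{u\cup P})
    =\log\sigma_{u\cup w\cup P}-\log\sigma_P.
$$
In $\dag_2$ the corresponding two terms telescope in the same way:
$$
    (\log\sigma_{w\cup P}-\log\sigma_P)+(\log\sigma_{u\cup w\cup P}-\log\sigma_{w\cup P})
    =\log\sigma_{u\cup w\cup P}-\log\sigma_P.
$$
All these operators are embedded into $\cL(\hilb_V)$ with identities, and the logarithm commutes with tensoring by an identity. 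The cancellation is therefore an exact operator identity, with no commutativity assumption needed, and the single-reversal step is proved.

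The main obstacle is the reduction to covered edge reversals. The paper has not stated this transformational characterization, so it has to be cited rather than derived from what is available. The fallback would be a direct argument, which would amount to reproving it. That argument would choose topological orderings and show that the multisets $\{v\cup\parents(v)\}$ and $\{\parents(v)\}$ differ only by cancelling pairs. Such a comparison is harder to organize because the individual terms genuinely change. For this reason I would rely on the Chickering reversal theorem and present the telescoping computation above as the substantive step.
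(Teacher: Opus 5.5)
Your proposal is correct and follows essentially the same route as the paper's proof. You use Chickering's covered-edge-reversal theorem to reduce to a single reversal $u\to w$, where the two affected logarithmic conditionals telescope to $\log\sigma_{u\cup w\cup P}-\log\sigma_P$ in both DAGs; the only cosmetic difference is that you read off $gI(\dag_1)_\sigma=gI(\dag_2)_\sigma$ directly from \eqref{eq:excess-global-information-logarithmic}, whereas the paper invokes \eqref{eq:relative-entropy-log-candidate}, which is the same computation.
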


\begin{proof}
An edge $a\to b$ is \emph{covered} if
$\parents(b)=\parents(a)\cup\{a\}$.  Any two Markov-equivalent DAGs can be
connected by a sequence of covered edge reversals
\citep[Theorem~2]{chickering:95}.  It is therefore enough to consider one
such reversal.

Such a reversal changes only the parent sets of $a$ and $b$.  Let
$P=\parents(a)$.  The two affected summands before the reversal are
$h_{a|P}(\sigma)$ and $h_{b|P\cup\{a\}}(\sigma)$; afterwards they are
$h_{b|P}(\sigma)$ and $h_{a|P\cup\{b\}}(\sigma)$.  The equality
\begin{equation}\label{eq:covered-reversal-telescoping}
    h_{a|P}(\sigma)+h_{b|P\cup\{a\}}(\sigma)
    =
    \log\sigma_{\{a,b\}\cup P}-\log\sigma_P
    =
    h_{b|P}(\sigma)+h_{a|P\cup\{b\}}(\sigma)
\end{equation}
shows that their sum is unchanged.  Hence
\eqref{eq:covered-reversal-telescoping} proves that the sum in
\eqref{eq:logarithmic-candidate-definitions} is unchanged by every covered
reversal. Therefore so is the candidate, and
\eqref{eq:relative-entropy-log-candidate} then gives the equality of the
excess global informations.
\end{proof}

For example, consider the fixed path $1-2-3$.  Its three non-collider
orientations are Markov equivalent.  For each of them,
$$
    \log T_\dag(\sigma)
    =
    \log\sigma_{\{1,2\}}
    +
    \log\sigma_{\{2,3\}}
    -
    \log\sigma_2,
    \qquad
    gI(\dag)_\sigma=I(1:3\cd2)_\sigma.
$$
The collider $1\to2\leftarrow3$ belongs to a different equivalence class and
satisfies instead
$$
    \log T_\dag(\sigma)
    =
    \log\sigma_1+\log\sigma_3
    +
    \log\sigma_{\{1,2,3\}}
    -
    \log\sigma_{\{1,3\}},
    \qquad
    gI(\dag)_\sigma=I(1:3)_\sigma.
$$
Its moral graph is complete, so every state is globally quantum Markov with
respect to the moral graph, whereas directed Markovness still requires
$1\ciph3$.  This also shows why the moral-graph implication above is generally
strict.

Proposition~\ref{prop:equivalence-class-invariance} is stronger than equality
of the corresponding sets of Markov states: it identifies the entropy defect
and the logarithmic reconstruction even for non-Markov input states. Thus
$gI(\dag)_\sigma$ and $T_\dag(\sigma)$ depend only on the
Markov-equivalence class. This conclusion is intrinsic. An
extrinsic kernel family is indexed by the parent sets of a particular DAG, and
an edge reversal changes those domains; without an additional rule
transforming the kernels, there is no corresponding invariance for the
extrinsic construction.

\section*{Acknowledgements}
\addcontentsline{toc}{section}{Acknowledgements}

PZ acknowledges the support of Ayudas Fundaci\'on BBVA a Proyectos de
Investigaci\'on Cient\'{\i}fica 2021, the Spanish Ministry of Economy and
Competitiveness grant PID2022-138268NB-I00, financed by
MCIN/AEI/10.13039/501100011033, FSE+MTM2015-67304-P, FEDER, EU.  PZ is also
affiliated with the Serra H\'unter S\`enior Program and the Barcelona School
of Economics.

\end{document}